\tikzstyle dtree=[grow'=up,sibling distance=2mm,level distance=2mm,thick]
\tikzstyle dtree node=[scale=0.3,shape=circle,very thin,draw]
\tikzstyle dtree black node=[style=dtree node,fill=black]
\tikzstyle dtree white node=[style=dtree node,fill=white]
\tikzstyle ftree=[grow'=right,sibling distance=3mm,level distance=3mm,thick]
\tikzstyle ftree node=[scale=0.4,shape=circle,very thin,draw]
\tikzstyle ftree black node=[style=ftree node,fill=black]
\tikzstyle ftree white node=[style=ftree node,fill=white]
\renewcommand{\o}{
  \begin{tikzpicture}[ftree]
    \node[ftree black node] {} ;
  \end{tikzpicture}
 }
\newcommand{\oo}{
  \begin{tikzpicture}[ftree]
    \node[ftree black node] {} 
    child { node[ftree black node]{ }   
     }
    ;
  \end{tikzpicture}
 }
\renewcommand{\oe}{
  \begin{tikzpicture}[ftree]
    \node[ftree black node] {} 
    child {}
    ;
  \end{tikzpicture}
 }
\newcommand{\ooo}{
  \begin{tikzpicture}[ftree]
    \node[ftree black node] {} 
    child { node[ftree black node]{ }   
             child { node[ftree black node]{ }   
                      }
              };
  \end{tikzpicture}
 }
\newcommand{\ooe}{
  \begin{tikzpicture}[ftree]
    \node[ftree black node] {} 
        child { node[ftree black node]{ }   
                 child {}
                };
  \end{tikzpicture}
 }
\newcommand{\eoe}{
  \begin{tikzpicture}[ftree]
    \node[ftree black node] {} 
    child[grow=left] { }  
    child[grow=right] { }   
    ;
  \end{tikzpicture}
 }
\newcommand{\eooe}{
  \begin{tikzpicture}[ftree]
    \node[ftree black node] {} 
     child[grow=right]{ node[ftree black node]{ } 
               child[grow=right] { }  
               }
    child[grow=left] { }   
    ;
  \end{tikzpicture}
 }
\newcommand{\oooe}{
  \begin{tikzpicture}[ftree]
    \node[ftree black node] {} 
      child { node[ftree black node]{ }   
               child { node[ftree black node]{ }   
                      child {}
                        }
                };
  \end{tikzpicture}
 }
\newcommand{\oope}{
 \begin{tikzpicture}[ftree]
    \node[ftree black node] {} 
    child[grow=left] { node[ftree black node]{} }
    child[grow=down] {  }
    child[grow=right] { node[ftree black node]{} }
    ;
  \end{tikzpicture}
 }
\newcommand{\eope}{
 \begin{tikzpicture}[ftree]
    \node[ftree black node] {} 
    child[grow=left] {  }
    child[grow=up] { node[ftree black node]{} }
    child[grow=right] {  }
    ;
  \end{tikzpicture}
 }
\newcommand{\eepe}{
 \begin{tikzpicture}[ftree]
    \node[ftree black node] {} 
    child[grow=left] {  }
    child[grow=up] {  }
    child[grow=right] {  }
    ;
  \end{tikzpicture}
 }
\newcommand{\eoepe}{
 \begin{tikzpicture}[ftree]
    \node[ftree black node] {} 
    child[grow=left]{  node[ftree black node]{}
     child[grow=left] { }
      }
    child[grow=up] {  }
    child[grow=right] {  }
    ;
  \end{tikzpicture}
 }
\newcommand{\epope}{
 \begin{tikzpicture}[ftree]
    \node[ftree black node] {} 
  child[grow=left] {  }
    child[grow=up] { node[ftree black node]{} }
    child[grow=down] {  }
    child[grow=right] {  }
    ;
  \end{tikzpicture}
 }
\newcommand{\epepe}{
 \begin{tikzpicture}[ftree]
    \node[ftree black node] {} 
    child[grow=left] {  }
    child[grow=up] {   }
    child[grow=down] {  }
    child[grow=right] {  }
    ;
  \end{tikzpicture}
 }
\newcommand{\epoepe}{
 \begin{tikzpicture}[ftree]
    \node[ftree black node] {}
     child[grow=right]{}
     child[grow=72]{}    
    child[grow=144]{ }
    child[grow=-72]{}    
    child[grow=-144]{ };
      \end{tikzpicture}
 }
\newcommand{\rta}{
\begin{tikzpicture}[dtree]
    \node[dtree black node] {}
    ;
  \end{tikzpicture}
}
\newlength{\treewidth}
\newlength{\treeheight}
\def\functionspace{\mathfrak}
\def\span{\mathop{\rm span}}
\def\P{\mathfrak P}
\def\H{\mathcal H}
\def\T{\mathcal T}
\def\FT{\mathcal FT}
\def\B{\mathcal B}
\def\W{\mathcal W}
\def\X{\mathcal X}
\def\Y{\mathcal Y}
\def\R{{\mathbb R}}
\def\Cinf{C^{\infty}(\R^d)}
\def\Cinfbi{C^{\infty}(\R^{2d})}
\def\Cinfpol{C^{\infty}_{\mathrm{pol}}(\R^{2d})}
\def\End{\mathrm{End}(\Cinf)}
\def\X{{\functionspace X}}
\def\dim{\mathop{\rm dim}}
\newtheorem{definition}{Definition}
\newtheorem{conjecture}{Conjecture}
\newtheorem{proposition}{Proposition}
\newtheorem{lemma}{Lemma}
\newtheorem{theorem}{Theorem}
\newtheorem{remark}{Remark}
\newcommand{\QED}{\hspace*{2em}\hfill$\square$}
\newcommand{\order}{\mathop{\rm order}}
\newcommand{\degree}{\mathop{\rm degree}}
\newenvironment{proof}[1]{\vspace{-2pt}{\em Proof #1.\/ }}{\QED
  \vspace{8pt}}
\newcommand{\gray}[1]{\textcolor{gray}{#1}}
\def\eqalign#1{\null\,\vcenter{\openup\jot \mathsurround=0pt
  \ialign{\strut\hfil$\displaystyle{##}$&$
        \displaystyle{{}##}$\hfil \crcr#1\crcr}}\,}
\title{The Lie algebra of classical mechanics}
\author{Robert I. McLachlan\thanks{School of 
Fundamental Sciences, 
Massey University, Palmerston North, New Zealand 
({\tt R.McLachlan@massey.ac.nz}).}
\and Ander Murua
}
\begin{document}

\maketitle

\begin{abstract}
\noindent
Classical mechanical systems are defined by their kinetic and potential energies.
They generate a Lie algebra under the canonical Poisson bracket. This Lie algebra,
which is usually infinite dimensional, is useful in analyzing the system, as well
as in geometric numerical integration. But because the kinetic energy is 
quadratic in the momenta, the Lie algebra obeys identities beyond those
implied by skew symmetry and the Jacobi identity. Some Poisson brackets,
or combinations of brackets, are zero for all choices of kinetic and
potential energy, regardless of the dimension of the system. 
Therefore, we study the universal object in this setting, the
`Lie algebra of classical mechanics' modelled on the
Lie algebra generated by kinetic and potential energy of a simple
mechanical system  with respect to the canonical Poisson bracket.
We show that
it is the direct sum of an abelian algebra $\X$, spanned by `modified'
potential energies isomorphic to the free commutative nonassociative algebra with one generator, 
and an algebra freely generated by the kinetic energy and its Poisson bracket 
with $\X$.
We calculate the dimensions $c_n$ of its homogeneous
subspaces
and determine the value of its entropy $\lim_{n\to\infty} c_n^{1/n}$.
It is  $1.8249\dots$, a fundamental constant associated to
classical mechanics. We conjecture that the class of systems with 
Euclidean kinetic energy metrics is already free, i.e., the 
only linear identities satisfied by the Lie brackets of all such systems 
are those satisfied by the Lie algebra of classical mechanics.
\end{abstract}

%


\def\eqalign#1{\null\,\vcenter{\openup\jot \mathsurround=0pt
  \ialign{\strut\hfil$\displaystyle{##}$&$
        \displaystyle{{}##}$\hfil \crcr#1\crcr}}\,}

\def\epsilon{\varepsilon}
\def\phi{\varphi} 
\def\A{{\cal{A}}}
\def\fix{\mathop{\rm fix}}

\def\smallmatrix{\null\,\vcenter\bgroup \baselineskip=7pt
 \ialign\bgroup\hfil$\scriptstyle{##}$\hfil&&$\;$\hfil
 $\scriptstyle{##}$\hfil\crcr}
\def\endsmallmatrix{\crcr\egroup\egroup\,}
\def\smatrix#1{\smallmatrix #1 \endsmallmatrix}
\def\spmatrix#1{\left( \smallmatrix #1 \endsmallmatrix \right)}

\section{Introduction}

Simple mechanical systems are defined by pairs $(Q,V)$, where
the configuration space $Q$ is a real Riemannian manifold and the 
potential energy $V$ is a smooth real function on $Q$.
The phase space $T^*Q$ has a canonical
Poisson bracket and a kinetic energy $T:T^*Q\to \R$ associated with the
metric on $Q$. Given two distinguished functions,
namely the kinetic and potential energies, one can ask what 
Lie algebra they generate under the Poisson bracket. 

In this paper we study, not the Lie algebra generated by a {\it particular}
$V$ and $T$, but the Lie algebra defined by the whole {\it class}
of simple mechanical systems. That is, one should think of the dimension
of $Q$ as being arbitrarily large, and the metric and potential energies
also being arbitrary. This `Lie algebra of classical mechanics' was
introduced in \cite{mc-ry}. However, that paper has an error that is corrected here.

Numerical integrators based on splitting and
composition are widely used in applications including molecular, celestial, 
and accelerator dynamics \cite{ha-lu-wa,mc-qu}. 
The vector field $X$ which is to be integrated
is split as $X=A+B$, where $A$ and $B$ have the same properties (e.g.
Hamiltonian) as $X$, but can be integrated exactly. The integrator is a composition of the form
\begin{equation}
\label{eq:higher}
 \prod_{i=1}^s \exp(a_i \tau A)\exp(b_i \tau B) = \exp(Z)
\end{equation}
where $\Delta t$ is the time step and $\exp(tX)$ is the time-$t$ flow of $X$.
The Baker--Campbell--Hausdorff formula gives
$Z\in L(A,B)$, the free Lie algebra with two generators.
Requiring $Z=\Delta t(A+B)+O((\Delta t)^{p+1})$ for
some integer $p>1$ gives a system of equations in the $a_i$ and $b_i$
which must be satisfied for the method to have order $p$. In the
case of general $A$ and $B$, then, at each order $n=1,\dots,p$ there are
$\dim L^n(A,B)$ such {\it order conditions}. Here $L^n(A,B)$ is the
subspace of $L(A,B)$ consisting of homogeneous elements of order $n$.

In this approach it is assumed that the only identities 
satisfied by Lie brackets of 
$A$ and $B$ are those due to antisymmetric and the Jacobi identity.
This is reasonable if one wants the method to work
for all $A$ and $B$. However, in the case of simple mechanical systems,
the Lie algebra is {\it never} free,
regardless of $T$, $V$, or the dimension
of the system. There are always identities satisfied by the Poisson brackets of the kinetic and potential
energy. The simplest
of these is 
\begin{equation}
\label{eq:vvvt}
\{V,\{V,\{V,T\}\}\}\equiv 0.
\end{equation} 
For, working in local
coordinates $(p,q)$ with $T=\frac{1}{2}p^T M(q) p$, and recalling
the canonical Poisson bracket
$\{A,B\} := \sum_i
\frac{\partial A}{\partial q_i}\frac{\partial B}{\partial p_i} -
\frac{\partial A}{\partial p_i}\frac{\partial B}{\partial q_i},$
we have that
$$ \{V,T\} = \sum_{i,j} \frac{\partial V}{\partial q_i} M_{ij}(q) p_j $$
is of degree 1 in $p$, and that
\begin{equation}
\label{eq:vvt}
\{V,\{V,T\}\} = \sum_{i,j}\frac{\partial V}{\partial q_i} M_{ij} 
\frac{\partial V}{\partial q_j}
\end{equation}
is a function of $q$ only. So $V$ and $\{V,\{V,T\}\}$ commute.

Thus, it was realized early on \cite{mclachlan} that in deriving 
high-order integrators as in Eq. (\ref{eq:higher}) for simple
mechanical systems, the order conditions
corresponding to $\{V,\{V,\{V,T\}\}\}$ and to all its higher Lie brackets can
be dropped. This means that more efficient integrators can be designed
for this class of systems. Much work has been done on this special case,
both because of its intrinsic theoretical and practical 
importance, and because it allows such big improvements over the 
general case. For example,
one can design special (`corrector' or `processor') 
methods of the form $\phi\psi\phi^{-1}$
\cite{bl-ca-ro2},
special methods for nearly-integrable systems such as the solar system
\cite{bl-ca-ro2,wi-ho-to},
special methods involving exact evaluation of the forces associated
with the `modified potential' (Eq. \ref{eq:vvt}) \cite{bl-ca-ro3}, and so on---see
\cite{mc-qu} for a survey. All of these studies
rely on the structure of the Lie algebra generated by kinetic and potential
energy.
Bases for the homogeneous subspaces of this Lie algebra have been constructed
for small orders \cite{bl-ca-ro3,bl-mo,murua}, and McLachlan
and Ryland \cite{mc-ry} attempted to construct the entire Lie algebra.
However, their construction is in error starting at order 13.

The paper is structured as follows. In Section \ref{ss:pgla}, we define our object
of study, the Lie algebra of classical mechanics. It is the universal Lie algebra
with two generators equipped with a grading by degree. (In the motivating example,
 the homogeneous components are homogeneous polynomials in $p$, and the two 
 generators have degree 2 (the kinetic energy) and degree 0 (the potential energy).)
 We explicitly construct this Lie algebra as the direct sum of an abelian and a free Lie algebra
 (\S\ref{ss:char})
 and specify an explicit generalized Hall basis (\S\ref{ss:basis}). In \S\ref{ss:real}
 we show that the motivating example has no further identities in general.
 
 Sections \ref{s:euclidean} and \ref{s:schrodinger} consider two special cases. The first is that the kinetic energy
 is equal to $\frac{1}{2}p\cdot p$ but the potential energy is arbitrary. In this case we conjecture,
 but are unable to prove, that there are no further identities. We provide some
 supporting evidence for the conjecture. The second, related, case is associated to the linear
 Schr\"odinger equation. We show that the universal Lie algebra generated by commutators of
 a potential and a Euclidean 
 Laplacian,  graded by degree of differential
 operators, is identical to the Euclidean ($p\cdot p$) case.
 
 Finally, in Section \ref{s:entropy} we describe the Lie algebra of classical mechanics
 quantitatively, enumerating the dimensions of its homogeneous components
 and their asymptotic rate of growth. Recall that in the free case, $\dim L^n(A,B)
 \sim \frac{1}{n}2^n$, so that the `entropy' of $L(A,B)$ is equal to 2.
 We calculate that the entropy of the Lie algebra of classical mechanics  is $1.8249111600523655937\ldots$, 
 a fundamental constant associated to classical mechanics.

\section{The free Lie algebra $L_\P(A,B)$ in
the class $\P$}
\label{s:lacm}
\subsection{Polynomially graded Lie algebras and the Lie algebra of classical mechanics}
\label{ss:pgla}

For simple mechanical systems, 
every Lie bracket of $T$ and $V$ is a homogeneous polynomial in $p$.
Furthermore,
the degrees of these polynomials 
combine in a natural way. We therefore introduce the following class $\P$
of Lie algebras.

We use the notation $[XY] := [X,Y]$, $[XYZ]:=[X,[Y,Z]]$, $[X^n Y] = [X,[X^{n-1},Y]]$, and for
sets $\X$, $\Y$, $[\X\Y] :=[\X,\Y] := \{[X,Y]\colon X\in\X,Y\in\Y\}$.

\begin{definition}
A Lie algebra $L$ is of class $\P$ (`polynomially graded') if it is graded, i.e.
$L=\bigoplus_{n\ge 0}L_n$, and its homogeneous subspaces $L_n$ satisfy
\begin{equation}
\label{eq:degree}
\eqalign{&[L_n,L_m]\subseteq L_{n+m-1}\hbox{\rm\ 
if $n>0$ or $m>0$; and} \cr
   &[L_0,L_0] = 0\cr}
\end{equation}
Note that this implies
\begin{equation}
\label{eq:zero}
[(L_0)^{n+1}L_n]=0
\end{equation}
for all $n$. 
We call the grading of $L$ its {\rm grading by degree}.
\end{definition}

We also need the concept of a Lie algebra which is free in a certain class.

\begin{definition} Let $F$ be a Lie algebra of class $\P$
generated by a set $\X$. Then $F$ is called a free Lie algebra in
the class $\P$ over the set $\X$, if for any Lie algebra
$R$ of class $\P$, every mapping $\X\to R$ that respects the degrees of the elements can be extended to 
a unique homomorphism $F \to R$ of Lie algebras
of class $\P$.
We denote as  $L_\P(A,B)$ the free Lie algebra in the class $\P$
where $A$ has degree 2 and $B$ has degree 0, and we call it the
{\rm Lie algebra of classical mechanics}.
\end{definition}

Consider the grading by degree of the free Lie algebra $L(A,B)$  as follows:
\begin{definition}
\label{def:gradingbydegree}
$L(A,B) = \bigoplus_{n \geq 0} L^n(A,B)$ such that $A \in L^2(A,B)$, $B \in L^0(A,B)$, and
\begin{equation*}
[L^n(A,B),L^m(A,B)]\subseteq L^{\min(n+m-1,0)}(A,B), \quad \mbox{for all} \quad n, m \geq 0.
\end{equation*}
We write $\degree(Y)=n$ if $Y \in L^{n}(A,B)$. 
\end{definition}

Note that we have
 \begin{equation*}
 L_{\P}(A,B)= L(A,B)/\mathcal{I},
 \end{equation*}
 where $\mathcal{I}$ is the Lie ideal generated by $[L^0(A,B),L^0(A,B)]$,
 and that  $L_{\P}(A,B)$ inherits a grading by degree, i.e. $L_{\P}(A,B) = \bigoplus_{n\geq 0} L^n_{\P}(A,B)$.

In addition to the grading by degree, $L_\P(A,B)$ also carries
the standard grading which we call the grading by {\it order},
generated by $\order(A)=\order(B)=1$ and 
$\order([Y,Z]) = \order(Y)+\order(Z)$.

Given smooth functions $T(p,q) = \frac{1}{2}M(q)(p,p)$ and $V = V(q)$ ($p,q \in \R^d$), let us denote by $\Cinfpol$ the Lie algebra under the Poisson bracket of smooth  functions $\Cinfbi$ that depend polynomially on $p$. Then  $\Cinfpol$ is of class $\P$, with $T$ of degree 2 and $V$ of degree 0.  
\begin{definition}
\label{def:Phi}
 Given smooth functions $T(p,q) = \frac{1}{2}M(q)(p,p)$ and $V = V(q)$ ($p,q \in \R^d$),  
\begin{equation*}
\Phi_{T,V}:L_\P(A,B) \to \Cinfpol
\end{equation*}
is the unique homomorphism of Lie algebras of class $\P$ such that $\Phi_{T,V}(A)=T$ and $\Phi_{T,V}(B)=V$. 
\end{definition}

\subsection{Characterization of the Lie algebra of classical mechanics}
\label{ss:char}

Given a set $\Y$, $L(\Y)$ denotes the free Lie algebra over the set $\Y$. Given two disjoint sets $\Y_1$ and $\Y_2$, the Lazard factorization of
free Lie algebras \cite{lazard} states that
$$L(\Y_1\cup\Y_2) \cong L(\Y_2) \oplus L(\cup_{n\ge 0}[\Y_2^n \Y_1]).$$
The generalized Hall basis~\cite{Reutenauer} for $L(\Y)$ (constructed in \S\ref{ss:basis}) can be obtained by iteratively eliminating elements from the generating set.  

We construct $L_\P(A,B)$ by successively applying a Lazard factorization to eliminate the elements of degree 0. We begin by first eliminating $B$,  which introduces new elements of degree 0 into the generating set which are eliminated at the next stage of an iteration. The minimum order of these new elements of degree 0 increases with each step of the iteration, allowing a passage to the limit. This leads to the following characterization of the Lie algebra of classical mechanics. We postpone its proof to Subsection~\ref{ss:basis}.

\begin{theorem}
\label{th:1}
The Lie algebra $L_\P(A,B)$ of classical mechanics satisfies
\begin{equation}
\label{eq:lpab}
L_\P(A,B) \cong \span(\X) \oplus L(A,[\X,A]),
\end{equation}
where 
$\span(\X)$
is abelian, of degree 0, and is isomorphic as a vector space to 
the free commutative nonassociative algebra generated by $B$
under the operation 
\begin{equation}
\label{eq:star}
U_1*U_2 := [U_2,[U_1,A]].
\end{equation}
\end{theorem}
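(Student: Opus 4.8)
The plan is to construct the isomorphism by an iterated Lazard factorization of $L(A,B)$ carried out modulo the ideal $\mathcal{I}$, peeling off the generators of degree $0$ one at a time. The observation that drives everything is that in $L_\P(A,B)$ every family produced by eliminating a degree-$0$ element is finite: if $\xi$ has degree $0$ then $[\xi^2 A]=[\xi,[\xi,A]]$ is again of degree $0$, so $[\xi^3 A]=[\xi,[\xi^2 A]]\in[L^0,L^0]=0$ and hence $[\xi^n A]=0$ for all $n\ge 3$; likewise $[\xi,\zeta]=0$ and $[\xi^n[\eta,A]]=0$ for $n\ge 2$ whenever $\zeta,\eta$ are of degree $0$. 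Thus each Lazard step has only finitely many surviving consequences, which is what makes the construction effective.

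First I would eliminate $B$. Lazard gives $L(B)\oplus L(\{[B^n A]:n\ge 0\})$; splitting off the one-dimensional abelian space $\span(B)$ and reducing modulo $\mathcal{I}$ leaves $A$ (degree $2$), the degree-$1$ element $[B,A]$, and the single new degree-$0$ element $B*B=[B^2A]$. I would then iterate. Eliminating a degree-$0$ generator $\xi$ splits off $\span(\xi)$ and, by the truncation above, contributes exactly: the degree-$1$ generator $[\xi,A]$ and the degree-$0$ generator $\xi*\xi=[\xi^2A]$ (both from $A$), and one degree-$0$ generator $\eta*\xi=[\xi,[\eta,A]]$ for each degree-$1$ generator $[\eta,A]$ already present; every other generator survives unchanged. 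A Jacobi computation together with $[L^0,L^0]=0$ gives $u*v=v*u$, so these degree-$0$ elements multiply under $*$ exactly as in the free commutative nonassociative algebra on $B$; a bookkeeping argument (create $u*v$ with $u\ne v$ when the larger factor is eliminated, and $u*u$ from $A$) shows each such element is produced once and only once.

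Convergence is controlled by order. Eliminating $\xi$ of order $m$ produces only degree-$0$ generators $\eta*\xi$ and $\xi*\xi$, of orders $\order(\eta)+m+1$ and $2m+1$, both strictly larger than $m$, so the minimum order of the degree-$0$ generators still to be processed strictly increases at each stage. Since each order-homogeneous component is finite dimensional, only finitely many degree-$0$ generators occur at each order, every such generator is eventually eliminated, and in the limit the accumulated abelian summands assemble into $\span(\X)$ while the stable free ideal is generated by $\{A\}\cup\{[\xi,A]:\xi\in\X\}$, i.e. equals $L(A,[\X,A])$. Because every stage is an internal direct sum of an abelian summand with a free Lie ideal, directness of the final sum and freeness of $L(A,[\X,A])$ are automatic, and the vector-space identification of $\span(\X)$ with the free commutative nonassociative algebra follows from the linear independence of the eliminated generators as members of a generalized Hall basis.

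The hard part will be making the passage to the limit rigorous and, relatedly, justifying that the Lazard factorization of the free algebra descends to the quotient $L_\P(A,B)$: one must check that modulo $\mathcal{I}$ the discarded tails $[\xi^n\,\cdot]$ are exactly what $\mathcal{I}$ contributes at each stage, that eliminating in order of increasing order is well defined (no generator is re-created or created at or below the current order threshold), and that the only relations surviving among the degree-$0$ elements are commutativity of $*$. I expect this convergence-and-descent argument, rather than any single Lazard step, to carry the weight of the proof; it is precisely what the explicit generalized Hall basis of \S\ref{ss:basis} is built to supply.
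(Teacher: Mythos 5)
Your proposal is correct and takes essentially the same approach as the paper: the iterated Lazard elimination of degree-$0$ generators, with the truncation identities $[\xi^3A]=0$, the commutativity of $*$ via Jacobi, and the strict increase of order enabling the passage to the limit, is precisely the strategy the paper sketches in \S\ref{ss:char}. The ``hard part'' you defer is exactly what the paper's proof in \S\ref{ss:basis} supplies: it chooses a generalized Hall order refining degree before order and verifies directly that the surviving Hall elements split into $\X$ together with a Hall basis of $L(A,[\X,A])$.
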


Specifically,
$$\X=\{B,\ B*B,\ (B*B)*B,\ (B*B)*(B*B), ((B*B)*B)*B,\dots\},$$
with one basis element associated to each rooted full binary tree (i.e., each node has degree 0 or 2). Because of the $A$ in (\ref{eq:star}), the order of an element with $n$ $B$s is $2n-1$. 

Note that the Jacobi identity
$$ [U_2,[U_1,A]] + [U_1,[A,U_2]] + [A,[U_2,U_1]]= 0$$
reduces in the case $\degree(U_1) = \degree(U_2) = 0$ to
$$ [U_2,[U_1,A]] - [U_1,[U_2,A]] = 0,$$
that is, $U_2*U_1 = U_1*U_2$.

In the Lie algebra generated by specific  $T(p,q) = \frac{1}{2}M(q)(p,p)$ and $V=V(q)$, that is, in the image by the homomorphism $\Phi_{T,V}$ given in Definition~\ref{def:Phi},
 for each pair $(U_1,U_2) \in \X \times \X$, if $V_1 = \Phi_{T,V}(U_1)$ and $V_2 = \Phi_{T,V}(U_2)$,  we have  in coordinates, 
$$\Phi_{T,V}(U_1 * U_2) = \{V_2,\{V_1,T\}\} = M(q)(\nabla V_1(q),\nabla V_2(q)).$$
In the Euclidean case $T = \frac{1}{2}p\cdot p$, we have 
$$\Phi_{T,V}(U_1 * U_2) = \{V_2,\{V_1,T\}\} =\nabla V_1(q) \cdot \nabla V_2(q).$$

\subsection{Basis for the Lie algebra of classical mechanics}
\label{ss:basis}

Given a totally ordered set $(\Y, >)$,  generalized Hall basis are totally ordered basis $(\H, >)$ for the free Lie algebra $L(\Y)$ satisfying the following conditions~\cite{Reutenauer}:
\begin{enumerate}
\item  $\Y$ is a totally ordered subset of the totally ordered set $\H$,
\item $\H = \bigcup_{n\geq 1} \H_n$ where 
\begin{equation*}
\H_1 = \Y, \quad \H_2 =  \{[Y_2,Y_1]\ : \ Y_2 < Y_1, \ Y_1, Y_2 \in \Y \},
\end{equation*}
 and for $n \geq 3$, $U \in \H_n$ if and only if $U =  [U_3 U_2 U_1]$ where 
\begin{equation*}
  U_3 ,  [U_2, U_1]  \in \H, \  [U_2,U_1] > U_3 \geq U_2, \  \order(U_1)+\order(U_2) + \order(U_3)=n.
\end{equation*}
\item If $U_1, U_2, [U_2,U_1] \in \H$,  then $U_2 < [U_2, U_1] $.
\end{enumerate}
Observe that $\H_n = \{ U \in \H\ : \ \order(U)=n\}$.

Clearly, one can construct a totally ordered set $\H$ satisfying conditions 1 and 2 by inductively determining $\H_{n}$ (for $n\geq 2$) from condition 2 provided a total order on $\cup_{k=1}^{n-1} \H_k$ has been chosen, and then arbitrarily extending that order relation to $\cup_{k=1}^{n} \H_k$.  However, that construction of $\H$ does not guarantee in general the fulfillment of condition 3. 

If instead of arbitrarily extending the total ordering of $\cup_{k=1}^{n-1} \H_k$ to $\cup_{k=1}^{n} \H_k$, one imposes that $U_1 < U_2$ provided that $\order(U_1) < \order(U_2)$, then condition 3 is automatically satisfied. 

A different generalized Hall  basis $\H$ for the free Lie algebra $L(A,B)$ that will allow us to construct a basis for the Lie algebra $L_{\P}(A,B)$ of classical mechanics can be constructed by imposing a different condition to the total order relation that also guarantees the fulfillment of condition 3 provided that condition 1 and 2 are satisfied: We require that $U_1 < U_2$ provided that either 
\begin{itemize}
\item $\degree(U_1) < \degree(U_2)$, or
\item $\degree(U_1) = \degree(U_2)$ and $\order(U_1) < \order(U_2)$.
\end{itemize}
where $\degree(\cdot)$ refers to the grading by degree of the free Lie algebra $L(A,B)$ given in Definition~\ref{def:gradingbydegree}.  Obviously, there is much freedom in extending the total ordering of $\cup_{k=1}^{n-1} \H_k$ to $\cup_{k=1}^{n} \H_k$ for each $n \geq 2$ in the inductive construction of the generalized Hall basis $\H$. In what follows, we assume that a particular choice has been made to give a total ordering to the elements with a common degree and order. For instance, such a total order can be defined as follows:  if $U=[U_1,U_2]$, $U '=[U'_1,U'_2]$, and $\degree(U)=\degree(U')$ and $\order(U)=\order(U')$, then $U > U'$ provided that either (i) $U_2 > U'_2$,  or (ii)  $U_2 = U'_2$ and $U_1 > U'_1$.  The elements of order up to six of the corresponding generalized Hall basis are displayed (sorted according to the total order in $\H$) in Table~\ref{tab:1}.

\begin{table}
\begin{center}
\begin{tabular}{|crr|}
\hline
$U$ & $\order(U)$ &$\degree(U)$\cr
\hline
$B $                      &      1     &           0  \cr
$[B,[B,A]]  $            &      3     &           0  \cr
$\gray{[B,[B,[B,A]]]} $        &      4     &   0  \cr
$ [[B,[B,A]],[B,A]]   $   &      5    &          0  \cr
$\gray{ [B,[B,[B,[B,A]]]]} $     &      5   &           0 \cr
$\gray{[[B,[B,[B,A]]],[B,A]]} $ &      6   &           0\cr 
$\gray{[B,[B,[B,[B,[B,A]]]]]} $ &      6   &           0\cr 
$[B,A]          $         &      2     &           1  \cr
$[[B,[B,A]],A]     $      &      4     &           1  \cr 
$\gray{[[B,[B,[B,A]]],A]} $&      5     &      1  \cr 
$   [[[B,[B,A]],[B,A]],A ] $&     6   &         1  \cr
$ \gray{[[B,[B,[B,[B,A]]]],A]} $&     6   &         1  \cr
$ [[B,A],[[B,[B,A]],A]]$  & 6     &  1   \cr   
$A   $                    &      1    &           2  \cr
$[[B,A],A]       $        &      3     &           2  \cr   
$ [[[B,[B,A]],A],A]  $    &      5    &          2   \cr
$ [[B,A],[[B,A],A]]  $    &     5    &          2   \cr
$\gray{ [[[B,[B,[B,A]]],A],A]} $&     6    &          2   \cr
$ [A,[[B,A],A]]  $        &      4   &          3  \cr
 $ [A,[[[B,[B,A]],A],A]]$ & 6 	&  3 \cr
$ [A,[[B,A],[[B,A],A]]] $ & 6	& 3  \cr
$  [A,[A,[[B,A],A]]]  $    &     5   &          4  \cr
$  [A,[A,[A,[[B,A],A]]]]$ & 6	& 5  \cr
\hline
\end{tabular}
\caption{Elements of order up to 6 of a generalized Hall basis $\H$ for $L(A,B)$. The elements in the Lie ideal $\mathcal{I}$  generated by $[L^0(A,B),L^0(A,B)]$ are depicted in gray. The elements belonging to the basis $\B$ for $L_{\P}(A,B) = L(A,B)/\mathcal{I}$ are depicted in black. 
The elements are listed in the total order defined in \S\ref{ss:basis}.
\label{tab:1}}
\end{center}
\end{table}

A basis  $\B$ for the Lie algebra of classical mechanics $L_{\P}(A,B)$ can be  constructed as a subset of such a generalized Hall  basis $\H = \cup_{n\geq 1} \H_{n}$. 
Indeed, $\B = \cup_{n\geq 1} \B_{n}$ where  $\B_1 = \{A,B\}$ and for $n \geq 2$,
\begin{equation*}
\B_n = \{U \in \H_n \ : \  U = [U_1, U_2], \ U_1,U_2 \in \B, \ \degree(U_2) > 0\}.
\end{equation*}

\begin{proof}{of Theorem~\ref{th:1}}
Let 
\begin{equation*}
\B^0 := \{X \in \B\ : \ \degree(X)=0\}
\end{equation*}
and
\begin{equation*}
\overline{\B} := \{X \in \B\ : \ \degree(X)>0\}.
\end{equation*}
Then 
the elements of $\B^0$ of positive order are of the form $X = [X_2 X_1 A]$, where $X_1,X_2 \in\B^0$,
and  the elements of $\overline{\B}$ of positive order either 
\begin{itemize}
\item[(i)] belong to  $\Y = [\X , A]$, or
\item[ (ii)] are of the form $[Y_2, Y_1]$ where $Y_1,Y_2 \in \Y$ and  $Y_2 < Y_1$, or
\item[ (iii)]  are of the form $[Y,  A]$ with $Y \in \Y$, or 
\item [(iv)] of the form $[U_3 U_2 U_1]$ where $U_1,U_2,U_3,[U_2, U_1] \in \overline{\B}$, and $[U_2,  U_1] > U_3 \geq U_2$. 
\end{itemize}
This shows that  $\B^0$ coincides with the set $\X$ introduced in Theorem~\ref{th:1}, and that 
$\overline{\B}$ is actually a generalized Hall basis of the free Lie algebra $L(A, [\X,A])$ over the set $\{A\} \cup [\X,A]$. \end{proof}

\subsection{Realization of the Lie algebra of classical mechanics}

\label{ss:real}

Given a particular mechanical system with potential energy $V(q)$ and kinetic energy $T(p,q) = 1/2 \, M(q)(p,p)$, the Lie algebra generated under the Poisson bracket by $V$ and $T$ will typically have linear dependencies than are not present in $L_\P(A,B)$, that is, $\ker \Phi_{T,V} \neq 0$. However, we will show
in this section that there are no linear dependencies that are shared by all possible mechanical systems with arbitrary degrees of freedom, that is,  the intersection of the kernels of all the homomorphisms
$\Phi_{T,V}$ corresponding to all smooth mechanical systems reduces to
$0$.  
Specifically, we will show in Theorem~\ref{th:2} below that $L_\P(A,B)$ can be realized as the projective limit of the Lie algebras of a sequence of mechanical systems with polynomial Hamiltonian functions. 

This is achieved by constructing (along the lines of the standard proof of independence of elementary differentials) for each $n$ specific kinetic and potential energy functions such that the Lie algebra they generate has no additional identities up to order $n$. 

Let $\mathcal{B} = \{U_1,U_2, U_3, \ldots\}$ be a totally ordered basis for $L_\P(A,B)$, as constructed in Subsection~\ref{ss:basis} (in particular, satisfying that $U_i < U_j$ if $\mathrm{degree}(U_i) < \mathrm{degree}(U_j)$), 
 such that, $U_1=A$, $U_2=B$, and for each $n\geq 1$,  the elements of $\{U_i\ :  d_{n-1} < i \leq d_n\}$ are of order $n$.
 
Let for each $n\geq 1$ denote as  $L_\P(A,B; n)$ the subspace of  $L_\P(A,B)$ (of dimension $d_n$) spanned by  $\{U_i\ :  i\leq d_n\}$. For each $i \geq 3$, we will define a monomial $W_i(p,q)$ in the variables $p_0,p_1,\ldots,p_{d_n}, q_1,\ldots,q_{d_n}$ as follows. 
\begin{itemize}
 \item $W_i(p,q) = q_i p_{j_1} p_{j_2}$ if $U_i = [U_{j_2}  U_{j_1} A]$, $\mathrm{degree}(U_{j_1})=\mathrm{degree}(U_{j_2})=0$,
\item $W_i(p,q) = p_i p_j$ if $U_i = [U_{j} , A]$, $\mathrm{degree}(U_{j}) = 0$,
\item $W_i(p,q) = p_i p_{j_1} q_{j_2} \cdots q_{j_m}$ if $U_i = [U_{j_m} \cdots U_{j_1} A]$, $m\geq 2$,  $\mathrm{degree}(U_{j_1})=0$, $\mathrm{degree}(U_{j_2})>0$,
\item $W_i(p,q) = p_0 p_i q_{j_1} \cdots q_{j_m}$ if $U_i = [U_{j_m} \cdots U_{j_1} A]$, $m\geq 1$, $\mathrm{degree}(U_{j_1})>0$.
\end{itemize}

\begin{theorem}
\label{th:2}
For each $n\geq 1$, consider the mechanical system (with $d_n+1$ degrees of freedom) with $V(q) = q_2$ and 
\begin{equation*}
T(p,q) = p_0 p_1 + \sum_{i=2}^{d_n}  W_i(p,q),
\end{equation*}
where $p= (p_0,p_1,\ldots,p_{d_n})$ and $q= (q_0,q_1,\ldots,q_{d_n})$.
Then,
\begin{equation*}
 \ker \Phi_{T,V} \bigcap  L_\P(A,B;n) = 0.
\end{equation*}
\end{theorem}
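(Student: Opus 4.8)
The plan is to prove that $\Phi_{T,V}$ is injective on $L_\P(A,B;n)$ by the standard leading-monomial device: to each basis element $U_i$ with $i\le d_n$ I attach a distinguished monomial $m_i$ in the variables $p_0,\dots,p_{d_n},q_1,\dots,q_{d_n}$, I show that $m_i$ occurs in $\Phi_{T,V}(U_i)$ with nonzero coefficient, and I show that $i\mapsto m_i$ is injective with $m_i$ the leading term of $\Phi_{T,V}(U_i)$ for a fixed term order. Distinct leading monomials imply linear independence, so a relation $\sum_{i\le d_n}\lambda_i\,\Phi_{T,V}(U_i)=0$ forces every $\lambda_i=0$; this is exactly $\ker\Phi_{T,V}\cap L_\P(A,B;n)=0$.

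First I would record the two computational rules that drive the calculation. Since $V=q_2$, bracketing with $B$ is the differential operator $\{V,\cdot\}=\partial_{p_2}$, while bracketing with $A$ is $\{\cdot,T\}$; because every monomial of $T$ has $p$-degree $2$, the bracket $\{\cdot,T\}$ raises $p$-degree by exactly $1$, in agreement with $\Phi_{T,V}$ respecting the grading by degree. In particular $\Phi_{T,V}(U_i)$ is homogeneous of $p$-degree $\degree(U_i)$, so images of basis elements of different degree are automatically linearly independent and it suffices to separate those sharing a common degree. The monomial $W_i$ introduces, for each $i\ge 3$, a fresh variable carrying the index $i$ --- namely $q_i$ in case~1 (when $U_i$ has degree $0$) and $p_i$ in cases 2--4 --- whereas every other variable appearing in $W_i$ carries an index $j_k<i$ coming from a sub-bracket of $U_i$. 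This fresh index is the fingerprint, and $m_i$ will be the monomial in $\Phi_{T,V}(U_i)$ exhibiting the variable $i$.

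The core is an induction on order showing that $\Phi_{T,V}(U_i)$ has leading term $m_i$, where $m_i$ is defined to mirror the four cases defining $W_i$. For a degree-$0$ product $U_i=U_{j_1}*U_{j_2}=[U_{j_2},[U_{j_1},A]]$ the leading terms of the factors are $q_{j_1}$ and $q_{j_2}$, and $\{q_{j_2},\{q_{j_1},T\}\}$ reduces to $\partial_{p_{j_2}}\partial_{p_{j_1}}W_i$, a nonzero multiple of $m_i=q_i$. For $U_i=[U_j,A]$ with $U_j$ of degree $0$, the single bracket $\{q_j,T\}=\partial_{p_j}T$ turns the $W_i=p_ip_j$ term into $m_i=p_i$. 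In cases 3--4 the leading term is produced by threading the leading term of the outer factor through the $p_i$, respectively $p_0p_i$, of $W_i$, the factor $p_0$ recording --- via the $p_0p_1$ part of $T$ --- the degree elevation caused by bracketing a positive-degree element with $A$. At each step I must check that the designated monomial does not cancel against contributions from the other summands of $T$, that it is maximal in the chosen term order, and that no two basis elements receive the same fingerprint.

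The main obstacle is this bookkeeping of $\{\cdot,T\}$: since $A$ maps to the full polynomial $T=p_0p_1+\sum_i W_i$, a single bracket produces many monomials at once, and I must argue both that the intended $m_i$ dominates all of them and that $m_i$ is not reproduced as the leading monomial of any other $U_j$. I expect this to be controlled by choosing the term order so that the fresh variable of largest index dominates, together with the observation that the index $i$ enters $T$ only through $W_i$; hence variable $i$ can be created in $\Phi_{T,V}(U_j)$ only by bracketing through $W_i$, and for the leading term this happens through essentially one channel dictated by the Hall structure of $U_i$. The delicate point is to verify that this channel contributes a nonzero coefficient, i.e. that the relevant momentum derivatives of $W_i$ do not vanish and that the symmetrisation forced by the relation $U_1*U_2=U_2*U_1$ does not annihilate it; the placement of the marker $p_0$ and the precise shapes of the monomials in the four cases have been engineered precisely so that the induction closes.
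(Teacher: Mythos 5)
Your overall strategy --- attach to each basis element $U_i$ a fingerprint keyed to the fresh index $i$, which enters $T$ only through the monomial $W_i$, and separate the images $\Phi_{T,V}(U_j)$ by extracting that fingerprint --- is the same idea the paper uses. The paper reduces Theorem~\ref{th:2} to Proposition~\ref{prop:1}: the linear functionals $\bigl(\partial_{p_i}+\partial_{q_i}\bigr)$ evaluated at $p^0=(1,0,\ldots,0)$, $q^0=0$ form a dual system to the $\Phi_{T,V}(U_j)$, which gives linear independence at once; the verification of that proposition is then omitted as technical, with a reference to Grossman. Your degree-$0$ computation is correct (in fact $\Phi_{T,V}(U_i)$ is \emph{exactly} a positive multiple of $q_i$ for $U_i\in\X$, since $\partial_{p_{j_1}}\partial_{p_{j_2}}T$ receives no contribution from monomials other than $W_i$ when $U_{j_1},U_{j_2}$ have degree $0$), and so is the identification of $p_i$ inside $\partial_{p_j}T$ for $U_i=[U_j,A]$.

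There is, however, a genuine gap in the positive-degree cases, which is where all the work lies. Your separating device is a term order in which ``the fresh variable of largest index dominates,'' and that order does not do what you need: bracketing with $A$ inserts the whole of $T$, so $\Phi_{T,V}(U_j)$ for small $j$ already contains variables of index far larger than $j$. Concretely, $\Phi_{T,V}([B,A])=\partial_{p_2}T$ contains the monomials $q_kp_{j_2}$ and $p_kq_{j_2}\cdots q_{j_m}$ for every basis index $k$ with $p_2\mid W_k$, and such $k$ exceed the index of $[B,A]$ (they range up to order $n$); hence the largest index occurring in $\Phi_{T,V}(U_j)$ is typically not $j$, and the map ``$U_j\mapsto$ leading monomial'' does not land on your $m_j$. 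What actually saves the argument is that every such contaminating monomial contains at least \emph{two} variables outside $\{p_0\}$, whereas the fingerprints $p_0^{a}q_i$ and $p_0^{a}p_i$ contain exactly one; this is precisely what the paper's evaluation at $q=0$, $p=(1,0,\ldots,0)$ exploits, since it annihilates everything else after a single differentiation. If you replace your order by one that first minimizes the number of non-$p_0$ variables, your argument becomes essentially Proposition~\ref{prop:1}; but you would still have to carry out the induction showing that the surviving coefficient is nonzero --- the ``delicate point'' you flag but do not resolve --- and that verification is the actual content of the proof.
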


Theorem~\ref{th:2} is a direct consequence of the following result.

\begin{proposition}
\label{prop:1}
Under the assumptions of Theorem~\ref{th:2}, let  us consider $p^0 \in \R^{d_n+1}$ and $q^0 \in \R^{d_n}$ given by $p^0 = (1,0,\ldots,0)$ and $q^0 = (0,\ldots,0)$.  Given $i , j \leq d_n$, 
\begin{equation*}
\left( \frac{\partial }{\partial p_i} +\frac{\partial }{\partial q_i} \right) \Phi_{T,V}(U_j)
\end{equation*}
evaluated at $(p,q)=(p^0,q^0)$ is nonzero if and only if $i=j$.
\end{proposition}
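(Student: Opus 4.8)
The plan is to read Proposition~\ref{prop:1} as a statement about the coefficients of two distinguished monomials in the polynomial $F_j:=\Phi_{T,V}(U_j)$, and then to prove that statement by induction on $\order(U_j)$. First I would record the elementary reduction. Since $\Phi_{T,V}$ respects the Poisson bracket and the grading by degree, $F_j$ is homogeneous in $p$ of degree $\delta_j:=\degree(U_j)$. At the point $(p^0,q^0)$ every coordinate vanishes except $p_0=1$, so evaluation there kills every monomial that is not a pure power of $p_0$; writing $[m]F$ for the coefficient of the monomial $m$ in $F$, this gives
$$\Bigl(\frac{\partial}{\partial p_i}+\frac{\partial}{\partial q_i}\Bigr)F_j\Big|_{(p^0,q^0)} \;=\; [\,p_0^{\,\delta_j-1}p_i\,]\,F_j \;+\; [\,p_0^{\,\delta_j}q_i\,]\,F_j .$$
Thus Proposition~\ref{prop:1} is equivalent to the assertion that, among the monomials of $F_j$ having exactly one variable other than $p_0$ (to the first power), the only one present is the index-$j$ monomial $\mu_j$ — namely $\mu_j=q_j$ when $\degree(U_j)=0$ and $\mu_j=p_0^{\,\delta_j-1}p_j$ otherwise — and that it occurs with nonzero coefficient.

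Call a monomial \emph{light} if its total degree $\rho$ in the variables other than $p_0$ is at most one. The induction hypothesis I would carry is: $F_j=c_j\mu_j+R_j$ with $c_j\neq0$, $\rho(\mu_j)=1$, and $\rho\geq2$ on every monomial of $R_j$. The base cases are immediate: $F_1=T=p_0p_1+\sum_i W_i$ has $\mu_1=p_0p_1$ while each $W_i$ has $\rho\geq2$, and $F_2=q_2=\mu_2$. For the inductive step I would use that each positive-order basis element has the right-nested form $U_j=[U_{j_m}\cdots U_{j_1}A]$ recorded in the definition of $W_j$, so that $F_j=\{F_{j_m},\{\cdots\{F_{j_1},T\}\cdots\}\}$, and expand via $\{F,G\}=\sum_{k\geq1}(\partial_{q_k}F\,\partial_{p_k}G-\partial_{p_k}F\,\partial_{q_k}G)$; the $k=0$ term never contributes because $q_0$ occurs nowhere.

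Since $\partial_{q_k}$ and $\partial_{p_k}$ with $k\geq1$ each lower $\rho$ by one, a light monomial of a single bracket $\{F_a,F_b\}$ can only arise by pairing the leading light monomial of one factor with a $\rho=2$ monomial of the other; a short check shows the would-be $\rho=0$ pairing is impossible, so no pure power of $p_0$ is ever created and the output stays at $\rho\geq1$. The four cases in the definition of $W_i$ are engineered precisely so that one \emph{designated} pairing survives: the $W_j$ monomial of $T$ is successively contracted against the leading monomials $\mu_{j_1},\dots,\mu_{j_m}$ of the outer factors (each Poisson contraction $\partial_{q}\!\cdot\!\partial_{p}$ eats one of the auxiliary factors $p_{j_k}$, and in case~4 the spare $p_0$), leaving exactly $\mu_j$ with a nonzero combinatorial coefficient $c_j$. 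This produces the diagonal value and I would verify it case by case, which is routine.

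The main obstacle is the off-diagonal vanishing: I must show that \emph{no other} light monomial survives, i.e. that every non-designated pairing either yields $\rho\geq2$ or cancels. The difficulty is that the light part of $\{F_a,F_b\}$ depends on the $\rho=2$ monomials of $F_a$ and $F_b$ that are divisible by $p_a,q_a,p_b,q_b$, and the bare hypothesis above does not control these; so the induction hypothesis must be strengthened to track exactly those $\rho=2$ monomials and to prove them sufficiently sparse. Here the specific shapes of the four $W_i$-monomials together with the Hall conditions $[U_2,U_1]>U_3\geq U_2$ on the right-nested factors are what force the only index promotable to a light monomial to be $j$ itself. Once the strengthened hypothesis is closed one gets $[\,p_0^{\delta_j-1}p_i\,]F_j+[\,p_0^{\delta_j}q_i\,]F_j=c_j\delta_{ij}$, and Theorem~\ref{th:2} follows at once: if $\sum_{j\leq d_n}a_jU_j\in\ker\Phi_{T,V}$, then applying $\partial_{p_i}+\partial_{q_i}$ at $(p^0,q^0)$ to $\sum_j a_jF_j=0$ gives $a_ic_i=0$, whence $a_i=0$ for all $i\leq d_n$.
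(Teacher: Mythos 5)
Your reduction of the statement to the two coefficients $[\,p_0^{\delta_j-1}p_i\,]F_j+[\,p_0^{\delta_j}q_i\,]F_j$, and the overall plan --- induction on order through the right-nested form $U_j=[U_{j_m}\cdots U_{j_1}A]$, with the four shapes of $W_i$ supplying one designated surviving contraction --- is the intended route: the paper says the construction follows ``the standard proof of independence of elementary differentials,'' and it omits the proof of Proposition~\ref{prop:1} altogether, referring instead to Lemma~2.3 of Grossman. So there is no written proof in the paper to compare against line by line; your sketch is a reasonable reconstruction of the shape such a proof must take, and your base cases and the observation that the $k=0$ Poisson term never contributes are correct.

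However, as written the proposal has a genuine gap, and you name it yourself without closing it. The entire content of the off-diagonal (``only if'') direction is that no light monomial other than $\mu_j$ survives in $F_j$, and this is governed by the $\rho=2$ monomials of the intermediate brackets $\{F_{j_1},T\}$, $\{F_{j_2},\{F_{j_1},T\}\}$, and so on: a light monomial of one factor can contract against a $\rho=2$ monomial of the other to produce a new, possibly spurious, light monomial (e.g.\ $\partial_{q_a}q_a\cdot\partial_{p_a}(p_0^{e}p_ap_c)=p_0^{e}p_c$). Your induction hypothesis controls only the unique light monomial of each $F_j$ and says nothing about which $\rho=2$ monomials occur, so the inductive step cannot be carried out as stated; you acknowledge that the hypothesis ``must be strengthened to track exactly those $\rho=2$ monomials,'' but you neither formulate the strengthened invariant nor show that it propagates through a Poisson bracket (note also that your $\rho=0$ exclusion argument is phrased for a single bracket of two basis images, whereas the inner factor in the iteration is a partial bracket, not a basis element). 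That invariant --- together with the deferred check that the designated contraction is not cancelled by another pairing producing the same monomial $\mu_j$ --- is precisely the technical heart of the argument, the part the authors describe as ``rather technical.'' So the proposal is a credible plan rather than a proof.
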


We omit the proof of Proposition~\ref{prop:1}, which is rather technical, and very similar to the proof of Lemma~2.3 in~\cite{grossman}.

\section{The case of classical mechanical systems in Euclidean space}
\label{s:euclidean}

We now consider  classical mechanical systems with kinetic energy $T(p)=\frac{1}{2} \, p \cdot p$.  From now on, given a smooth potential $V:\R^d\to\R$, we will denote by $\Phi_{V}$ the homomorphism $\Phi_{T,V}$ given in Definition~\ref{def:Phi} for $T(p) = \frac{1}{2} \, p \cdot p$ ($p \in \R^d$).

As in \S\ref{ss:real}, given a particular Euclidean mechanical system with potential $V(q)$, the Lie algebra $\Phi_{V}(L_\P(A,B))$ generated under the Poisson bracket by $V$ and $T$ will typically have linear dependencies than are not present in $L_\P(A,B)$, that is, $\ker(\Phi_{V}) \neq 0$. 

We conjecture that there are no relations other than those inherited from $L_\P(A,B)$ (antisymmetry, Jacobi identity, and that the vanishing of Lie bracket of two elements of degree 0) that are shared by all possible smooth Euclidean mechanical systems with arbitrary dimensions. More precisely,
\begin{conjecture}
\label{conj:ker0}
Consider all smooth potentials $V \in \Cinf$ with arbitrary $d$.
Then it holds that
\begin{eqnarray*}
\bigcap_{d\geq 1} \bigcap_{V \in \Cinf} \ker(\Phi_{V}) = 0.
\end{eqnarray*}
\end{conjecture}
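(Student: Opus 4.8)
The plan is to prove the conjecture one order at a time, in the spirit of the projective-limit realization of Theorem~\ref{th:2}. Since $L_\P(A,B)=\bigoplus_n L^n_\P(A,B)$ with finite-dimensional homogeneous components, and since every $\Phi_V$ respects both the grading by order and the grading by degree, the intersection $\bigcap_{d\ge 1}\bigcap_{V\in\Cinf}\ker\Phi_V$ vanishes precisely when, for each nonzero homogeneous $Y$, some dimension $d$ and some potential $V$ give $\Phi_V(Y)\neq 0$. It therefore suffices to produce, for each order $n$, a dimension $d$ and a potential $V$ for which $\Phi_V$ is injective on $L_\P(A,B;n)$. The decisive difference from Theorem~\ref{th:2} is that the kinetic energy is now frozen at $T=\frac12 p\cdot p$, so all the freedom used there to engineer the monomials $W_i$ must instead be extracted from $V$ and $d$ alone.

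Next I would make the image of the basis $\B$ of Subsection~\ref{ss:basis} fully explicit. By Theorem~\ref{th:1} the basis splits into the abelian part $\X$ and the free part $L(A,[\X,A])$. On $\X$ the map is scalar-valued: $\Phi_V(B)=V$ and $\Phi_V(U_1*U_2)=\nabla V_1\cdot\nabla V_2$, so each tree of $\X$ is sent to an iterated contraction of gradients of $V$, that is, to a scalar elementary differential of the gradient field $\nabla V$. Each generator $[U_i,A]\in[\X,A]$ is sent to $\{V_i,T\}=\nabla V_i\cdot p$, and since the grading by degree coincides with the degree in $p$, a basis element of degree $m$ is sent to a homogeneous polynomial of degree $m$ in $p$ whose coefficients are elementary differentials of $\nabla V$. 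Thus injectivity of $\Phi_V$ on $L_\P(A,B;n)$ is equivalent to the linear independence, over the ground field, of the finite family of elementary differentials indexed by $\B\cap L_\P(A,B;n)$.

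To test this independence I would take $V$ to be a generic polynomial in $q_1,\dots,q_d$ of large degree, treating its Taylor coefficients as indeterminates; then each $\Phi_V(U_j)$ is a polynomial in $p$ and in those indeterminates, and independence for generic $V$ reduces to linear independence of these coefficient polynomials. The classical independence theorem for elementary differentials indexed by distinct rooted trees (compare the reference cited after Proposition~\ref{prop:1}) asserts such independence for an \emph{unconstrained} vector field, and the natural attempt is to exhibit, for each basis element, a distinguished monomial in the indeterminate coefficients and in $p$ that occurs in its image and in no other, yielding a triangular separating system with respect to a suitable monomial order.

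The hard part, and the reason the statement remains only conjectural, is that $\nabla V$ is not an arbitrary field but a \emph{gradient} field, so its iterated Jacobians are iterated Hessians of $V$ and are fully symmetric in all indices. This symmetry forces linear coincidences among elementary differentials that are distinct for a generic field, collapsing the rooted-tree basis onto a smaller set. Part of this collapse is already built into $L_\P(A,B)$: the commutativity $U_1*U_2=U_2*U_1$ imposed on $\X$ is exactly the symmetry $\nabla V_1\cdot\nabla V_2=\nabla V_2\cdot\nabla V_1$ of the leading contraction. The content of the conjecture is that this is the \emph{only} further relation, i.e. that the full Hessian symmetry induces no linear dependence on $\span(\B)$ beyond making $\X$ free commutative and $L(A,[\X,A])$ free. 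Establishing this amounts to computing the kernel of the specialization of the tree-to-elementary-differential map at a gradient field and verifying that it meets $\span(\B)$ only in $0$. I expect precisely this to be the genuine obstacle, since the Hessian-symmetry relations are global and need not respect the order-by-order triangular structure that makes the variable-metric argument of Theorem~\ref{th:2} succeed.
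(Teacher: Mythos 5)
Your proposal does not prove the statement, and it should be said plainly that the statement is labelled a \emph{conjecture} in the paper precisely because the authors cannot prove it either: the gap you identify in your final paragraph is the genuine and still-open obstacle, not a detail to be filled in later. What the paper does achieve is a clean factorization of the problem: it introduces the Lie algebra $\span(\T)$ of free trees with thick and thin vertices, writes $\Phi_V=\Psi_V\circ\Theta$ (Proposition~\ref{prop:elHam}), and invokes Calvo's result (Proposition~\ref{prop:calvo}) that $\bigcap_V\ker(\Psi_V)=0$, i.e.\ that the tree-indexed elementary Hamiltonians are linearly independent over the class of polynomial potentials. This reduces Conjecture~\ref{conj:ker0} exactly to Conjecture~\ref{conj:Theta}, the injectivity of $\Theta$, which the paper settles only on the subspaces $\X$ and $[\X,A]$ via the Bergeron--Loday theorem on the symmetrized pre-Lie product. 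Your plan retraces this same reduction in different words --- ``computing the kernel of the specialization of the tree-to-elementary-differential map at a gradient field'' is precisely $\ker\Theta$ once Calvo's independence is granted --- but supplies no new mechanism for showing that this kernel is trivial, so it adds nothing beyond what the paper already records as evidence.

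Two concrete corrections to the intermediate steps. First, your claim that injectivity of $\Phi_V$ on $L_\P(A,B;n)$ is equivalent to linear independence of ``elementary differentials indexed by $\B\cap L_\P(A,B;n)$'' conflates two index sets: the elementary Hamiltonians are indexed by trees in $\T$, and a basis element $U\in\B$ is sent by $\Theta$ to a linear \emph{combination} of several trees (Table~\ref{tab:3} shows, e.g., that $[[[B,[B,A]],A],A]$ maps to a combination of two distinct trees with coefficients $2$ and $-2$). The independence actually required is that of the vectors $\Theta(U)$ inside $\span(\T)$, not of differentials ``indexed by $\B$''. Second, the distinguished-monomial, triangular-system strategy borrowed from the unconstrained rooted-tree setting is exactly what the Hessian symmetry defeats: for a gradient field, distinct rooted trees related by re-rooting or by permuting arguments of a symmetric derivative yield identical elementary differentials, so there is no canonical leading term per basis element; moreover Tables~\ref{tab:dimLACM} and~\ref{tab:dimT} show $\dim L^n_\P(A,B)<\dim\T^n$ for $n\geq 9$, so $\Theta$ is not surjective and any separating argument must operate inside a proper, not explicitly described, subspace of $\span(\T)$. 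Until one of these obstructions is overcome, the statement remains open.
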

We will next give some hints that we believe could be helpful for a possible proof of the conjecture.

Let $\T$ be the set of free trees with two types of vertices, thick ($\o$) and thin ($\cdot$), such that vertices with more than one edge are thick. Then 
for each $W \in L_\P(A,B)$, $\Phi_V(W)$ can be written as a linear combination with integer coefficients of certain functions (referred to as elementary Hamiltonians) associated to each tree in $\T$. (These trees
are an alternative representation of the ``free RKN trees'' of \cite{Calvo,murua}.)

It is useful to think of each edge connecting a thick vertex to a thin vertex as a {\em free-end} edge; such a  free-end edge is ``ready" to be grafted to (fill its free end with) a thick vertex of another tree. 

 Given $u \in \T$, $\order(u)$ is the sum of the number of free-end edges plus twice the number of thick vertices minus one. As for the degree: $\degree(\cdot)=2$, and for $u \in \T\backslash\{\cdot\}$, $\degree(u)$ is the number of free-end edges of $u$.

We next define a binary operation  $\lhd$ on $\span(\T)$. 
 \begin{definition}
 $\lhd: \span(\T) \otimes  \span(\T) \to  \span(\T)$ is defined as follows: 
 $\o \lhd \cdot= \oe$,   $u \lhd \o = 0$,  $\cdot \lhd u = 0$  for all $u \in \T$.
If $u \in \T$ is a tree with $\degree(u)=m$ and  $k\geq 1$ thick vertices, then 
\begin{itemize}
\item[(i)] $u \lhd \cdot$ is the sum of the $k$ trees with $k$ thick vertices and degree $m+1$ obtained by adding a free-end edge to one thick vertex of $u$. 
\item[(ii)] $\o \lhd u$ is the sum of the $m$ trees with $k+1$ thick vertices and degree $m-1$ obtained by grafting one free-end edge of $u$ to a new thick vertex. 
\item[(iii)] Given two trees $u, v \in \T\backslash\{\o,\cdot\}$, let $k$ be the number of thick vertices of $u$ and let $m=\degree(v)$. Then $u \lhd v \in \T$ is the sum of the $k m$ trees obtained by grafting  a free-end edge of $v$ to a thick vertex of $u$.
\end{itemize}
\end{definition}
Note that $$\order(u \lhd v) = \order(u) + \order(v) \quad \mbox{and} \quad \degree(u \lhd v) = \degree(u) + \degree(v)-1.$$ 


\begin{proposition}
\label{prop:elHam}
The binary operation $[u,v] = u \lhd v  - v \lhd u$ (for $u,v \in \span(\T)$) endows the  
 vector space $\span(\T)$ with a Lie algebra structure of class $\P$. 
 For each $V \in \Cinf$, there exists a homomorphism 
\begin{equation*}
\Psi_V: \span(\T) \rightarrow \Cinfpol
\end{equation*}
of Lie algebras of class $\P$ such that $\Psi_V(\cdot)=T$ and $\Psi_V(\o)=V$.
By the universal property of $L_\P(A,B)$, 
 \begin{itemize}
\item[(i)] there exists a unique homomorphism 
\begin{equation*}
\Theta:L_\P(A,B) \rightarrow \span(\T)
\end{equation*}
of Lie algebras of class $\P$ such that $\Theta(A)=\o$ and $\Theta(B)=\cdot$, and
\item[(ii)] $\Phi_V = \Psi_V \circ \Theta.$
\end{itemize}
\end{proposition}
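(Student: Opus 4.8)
The plan is to route everything through the canonical Poisson bracket, deducing the Lie-algebra axioms for $[u,v]=u\lhd v-v\lhd u$ from those of the Poisson bracket instead of checking them on trees. This detour is essentially forced: evaluating $(x\lhd y)\lhd z-x\lhd(y\lhd z)$ at $x=y=\o$, $z=\cdot$ gives $-\oo$, whereas the same expression with $y,z$ exchanged gives $0$, so $\lhd$ is not left-symmetric and the Jacobi identity is \emph{not} a formal consequence of a pre-Lie identity for $\lhd$. First I would fix, for each smooth $V$ on $\R^d$ and each $d$, the \emph{linear} map $\psi_V:\span(\T)\to\Cinfpol$ sending each tree to its elementary Hamiltonian (the free RKN tree functions of \cite{Calvo,murua}), normalised by $\psi_V(\cdot)=T$ and $\psi_V(\o)=V$. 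Everything then reduces to the single \emph{intertwining identity}
\begin{equation*}
\psi_V(u\lhd v-v\lhd u)=\{\psi_V(u),\psi_V(v)\},\qquad u,v\in\T,
\end{equation*}
together with the joint faithfulness of the family $\{\psi_V\}$.

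Proving the intertwining identity is the heart of the matter and the step I expect to absorb nearly all the work. I would verify it index-by-index from the explicit form of an elementary Hamiltonian: a thick vertex of valence $r$ carries the $r$-th derivative tensor of $V$, each free-end edge carries a factor of $p$ with its index, and each thick--thick edge contracts two indices through the Euclidean metric. Since $T=\frac12\,p\cdot p$ and $V=V(q)$,
\begin{equation*}
\{F,G\}=\sum_i\big(\partial_{q_i}F\,\partial_{p_i}G-\partial_{p_i}F\,\partial_{q_i}G\big),
\end{equation*}
and the Leibniz rule makes $\partial_{q_i}$ distribute over the $V$-factors of $\psi_V(u)$ as a sum over the thick vertices of $u$, while $\partial_{p_i}$ deletes one $p$-factor, i.e.\ selects a free-end edge, whose freed index is then contracted against the differentiated partner. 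The first term of the bracket therefore reproduces $\psi_V(u\lhd v)$ --- grafting a free-end edge of $v$ onto a thick vertex of $u$, which is clause (iii), with clauses (i), (ii) and the base case $\o\lhd\cdot=\oe$ appearing as the instances $v=\cdot$ and $u=\o$ --- and the second term reproduces $-\psi_V(v\lhd u)$. The degenerate rules $u\lhd\o=0$ and $\cdot\lhd u=0$ fall out of $\partial_p V=0$ and $\partial_q T=0$. The genuinely laborious part is the bookkeeping of indices, signs, and of the multiplicities (the factor $km$ in clause (iii)).

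Granting the intertwining identity, the three assertions follow. Bilinearity and antisymmetry of the bracket are immediate; from $\degree(u\lhd v)=\degree(u)+\degree(v)-1$ and the observation that $x\lhd y=0$ whenever $\degree(y)=0$ (every clause then produces an empty sum), the grading conditions of class $\P$ hold, in particular $[L_0,L_0]=0$. For the Jacobi identity, let $J(u,v,w)$ denote the Jacobiator in $\span(\T)$; applying the intertwining identity twice gives $\psi_V(J(u,v,w))=\sum_{\mathrm{cyc}}\{\{\psi_V(u),\psi_V(v)\},\psi_V(w)\}=0$ for every $V$ and every $d$, by the Jacobi identity of the Poisson bracket. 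The elementary Hamiltonians of distinct trees are linearly independent once $V$ and $d$ are allowed to vary, i.e.\ $\bigcap_{d,V}\ker\psi_V=0$ --- the standard independence of elementary Hamiltonians, proved order by order by exhibiting potentials and dimensions that realise the required trees independently (analogous to Proposition~\ref{prop:1}; cf.\ \cite{grossman}) --- so $J(u,v,w)=0$ and $\span(\T)$ is a Lie algebra of class $\P$. Part 2 is then immediate: by the intertwining identity $\psi_V$ is a Lie homomorphism, it respects the grading (free-end edges count the $p$-degree), and it sends $\cdot\mapsto T$ and $\o\mapsto V$, so $\Psi_V:=\psi_V$ is the required homomorphism of class-$\P$ Lie algebras.

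Part 3 is then formal. Since $\span(\T)$ is a Lie algebra of class $\P$ and the generator assignment $A\mapsto\cdot$, $B\mapsto\o$ respects degrees (forced by $\degree(A)=2=\degree(\cdot)$ and $\degree(B)=0=\degree(\o)$), the universal property of $L_\P(A,B)$ yields a unique class-$\P$ homomorphism $\Theta$ extending it, which is (i). For (ii), both $\Phi_V$ and $\Psi_V\circ\Theta$ are class-$\P$ homomorphisms $L_\P(A,B)\to\Cinfpol$ agreeing on generators, since $(\Psi_V\circ\Theta)(A)=\Psi_V(\cdot)=T=\Phi_V(A)$ and $(\Psi_V\circ\Theta)(B)=\Psi_V(\o)=V=\Phi_V(B)$; by the uniqueness clause of the universal property they coincide.
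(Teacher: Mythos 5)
Your proposal is correct, and it supplies an argument where the paper gives none: Proposition~\ref{prop:elHam} is stated without proof, with the elementary Hamiltonians merely attributed to \cite{Calvo,murua}. Your route --- establish the intertwining identity $\Psi_V(u\lhd v-v\lhd u)=\{\Psi_V(u),\Psi_V(v)\}$ clause by clause from Definition~\ref{def:elham} ($\partial_{q_i}$ distributing over the thick-vertex factors, $\partial_{p_i}$ selecting a free-end edge, the contraction over $i$ performing the graft), then pull back antisymmetry, the grading, and Jacobi from $\Cinfpol$ --- is the natural way to fill that gap, and your spot observation that $\lhd$ is \emph{not} left-symmetric is genuinely valuable: it blocks the tempting shortcut ``pre-Lie $\Rightarrow$ Jacobi,'' which the paper's later remark about $\T^1$ being the free pre-Lie algebra might otherwise suggest applies globally. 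Two small points. First, your derivation of Jacobi from joint faithfulness should cite Proposition~\ref{prop:calvo} rather than Proposition~\ref{prop:1}; that is exactly the statement $\bigcap_V\ker\Psi_V=0$ you need, and since its proof in the paper is a direct linear-independence argument not relying on Proposition~\ref{prop:elHam}, there is no circularity --- though it does make ``$\span(\T)$ is a Lie algebra'' logically contingent on Calvo's independence result, where a direct combinatorial check of the Jacobiator would be self-contained. Second, you have silently (and correctly) repaired the statement: as printed, $\Theta(A)=\o$ and $\Theta(B)=\cdot$ cannot define a class-$\P$ homomorphism since $\degree(A)=2\neq 0=\degree(\o)$; the degree-respecting assignment $A\mapsto\cdot$, $B\mapsto\o$ that you use is the one consistent with Table~\ref{tab:3}. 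The only part left as a sketch is the index and multiplicity bookkeeping in the intertwining identity (the $km$ terms of clause (iii) and the absence of symmetry factors in Definition~\ref{def:elham}), but you have identified the mechanism correctly and the degenerate clauses ($u\lhd\o=0$ from $\partial_pV=0$, $\cdot\lhd u=0$ from $\partial_qT=0$) come out as you say.
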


The Lie algebra homomorphism $\Psi_V$ in the statement of Proposition~\ref{prop:elHam} can be defined as follows.
\begin{definition}
\label{def:elham}
Given $V \in \Cinf$, for each $u \in \T$, the {\em elementary Hamiltonian} $\Psi_V(u)$ can be defined as follows: 
$\Psi_V(\o)=V$,  $\Psi_V(\cdot) = T$, and if $u \in \T$ has $m+1$ vertices, then 
\begin{equation*}
\Psi_V(u) = \sum_{j_1,\ldots,j_m=1}^{d} \prod_{i=1}^{m+1} H(j_1,\ldots,j_m)^{[i]}
\end{equation*}
where each of the factors $H(j_1,\ldots,j_m)^{[i]}$ is associated to a vertex in $u$. More precisely, let us label each vertex of $u$ from $1$ to $m+1$ and each edge of $u$ with a different dummy index $j_1,\ldots,j_m$. If the $i$th vertex is thin and  the $\ell$th edge is connected to  some thick vertex, then $H(j_1,\ldots,j_m)^{[i]} = p_{\ell}$. If the $i$th vertex is thick and has $r$  edges labelled by $\ell_1,\ldots,\ell_r$, then 
$$H(j_1,\ldots,j_m)^{[i]} = V_{\ell_1\ldots\ell_r} := \frac{\partial^r V}{\partial q_{\ell_1} \cdots\partial  q_{\ell_r}}.$$
\end{definition}

 In Table~\ref{tab:2}, the images by $\Phi_V\colon L_\P(A,B)  \rightarrow \Cinfpol$ and $\Theta\colon L_\P(A,B) \rightarrow \span(\T)$ of elements $W$ of a basis of $L_\P(A,B)$ of order up to 6 are displayed. The images by  
$\Psi_V:\span(\T) \rightarrow \Cinfpol$ can be determined from the identity $\Phi_V = \Psi_V \circ \Theta$.

 \renewcommand\arraystretch{1.5}
\begin{table}
\begin{center}
\begin{tabular}{|cccl|}
\hline
$U$ & $\degree(U)$  & $\Theta(U)$ &$\Phi_V(U)$\cr
\hline
$B$                         &  0 &     \o     &          V  \cr \hline
$[B,[B,A]]$              &  0 &   \oo      &            $V_i V_i$  \cr \hline
$ [[B,[B,A]],[B,A]]$  &   0 & $2\, \ooo $   &          $2\,\ V_i V_{ij} V_j  $ \cr \hline
$[B,A]  $                 &   1 &  $-\oe$     &            $-\,V_i\, p_i$   \cr \hline
$[[B,[B,A]],A] $          &  1 &    $2\, \ooe$     &           $2 V_i\, V_{ij} \, p_j$ \cr \hline 
  $ [[[B,[B,A]],[B,A]],A ] $&  1 &   $4\, \oooe$   &  $4  V_i\, V_{ij}\, V_{jk}\, p_k$    \cr
                                  &    &      $+ 2\, \oope$                       &  + $2 p_i\,  V_{ijk}\,  V_j \, V_k $ \cr \hline
$ [[B,A],[[B,[B,A]],A]]$  & 1 &  2\oope    &   $2 p_i\,  V_{ijk} \,  V_j \, V_k $ \cr \hline   
$A $                      &     2 &  $\cdot$    &           $T$  \cr \hline
$[[B,A],A] $              &    2 &   \eoe  &   $   p_i\, V_{ij}  \, p_j$  \cr \hline   
$ [[[B,[B,A]],A],A] $     &  2 &  $2\,\eooe$     &    $2 p_k\, V_{ki}\, V_{ij} \, p_j$         \cr
                                &    &      $-2\,\eope$   & $-2 p_j\, V_i \, V_{ijk} \, \,  p_k$  \cr \hline
$ [[B,A],[[B,A],A]] $     &   2 &     $2\,\eooe$     &        $2 p_k\, V_{ki}\, V_{ij} \, p_j$    \cr
                                &    &    $+\,\eope$      & $+ p_j\, V_i \, V_{ijk} \, \,  p_k$  \cr \hline
$ [A,[[B,A],A]] $         &     3 &   $\eepe$     &       $ p_i\, p_j\, V_{ijk} \,  p_k$    \cr \hline
  $[A,[[[B,[B,A]],A],A]]$ &   3 &  $-6\, \eoepe$ &   $-6  p_l\, V_{li}\, V_{ijk} \, p_j\,  p_k$  \cr
                                 &      &  $-2\, \epepe$ & $-2 p_j\, V_i \, V_{ijkl}\,  p_k\, p_l$ \cr \hline
$ [A,[[B,A],[[B,A],A]]] $ &  3 &  $-3\, \eoepe$	&  $-3  p_l\, V_{li}\, V_{ijk} \, p_j\,  p_k$  \cr
                                 &      &    $+ \epope$   &  $+ p_j\, V_i \, V_{ijkl}\,  p_k\, p_l$ \cr \hline
 $ [A,[A,[[B,A],A]]] $     &   4 &    $ \epepe$  &  $ p_l\, p_k\,  p_j\, p_i\,  V_{ijkl}$    \cr \hline
 $ [A,[A,[A,[[B,A],A]]]] $&   5 &  \epoepe	&      $ p_m\, p_l\, p_k\,  p_j\, p_i\, V_{ijklm}$      \cr
\hline
\end{tabular}
\caption{Image by $\Theta$ and $\Phi_V$ of Prop. \ref{prop:elHam} of elements in $L_\P(A,B)$ of order up to 6. Repeated indices are summed from 1 to $d$.
\label{tab:3}}
\end{center}
\end{table}

The following result was proved (using a different notation and terminology) by Calvo in~\cite{Calvo}.
\begin{proposition}
\label{prop:calvo}
  Consider all polynomial potentials $V:\R^d\to \R$ with arbitrary $d$.
Then it holds that
\begin{eqnarray*}
 \bigcap_{V} \ker(\Psi_{V}) = 0.
\end{eqnarray*}
\end{proposition}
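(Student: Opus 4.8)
The plan is to prove the equivalent statement that the elementary Hamiltonians $\{\Psi_V(u)\colon u\in\T\}$ are linearly independent as $V$ ranges over all polynomial potentials and $d$ over all dimensions; this is precisely the classical independence of elementary differentials, here in Hamiltonian form. Since $\Psi_V$ respects both the grading by order and the grading by degree, the subspace $\bigcap_V\ker(\Psi_V)$ is bigraded, and its $(n,m)$-component equals $\bigcap_V(\ker\Psi_V)^{(n,m)}$; hence it suffices to prove injectivity on each finite-dimensional homogeneous component $\span\{u\in\T\colon\order(u)=n,\ \degree(u)=m\}$. So I fix $n,m$, work with the finite set of trees of that order and degree, and aim to produce, for a given nonzero $W=\sum_u c_u u$, a single polynomial potential $V$ (in some dimension $d$) with $\Psi_V(W)\neq0$.

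First I would linearize the dependence on $V$. Fix $d\geq n$ and evaluate at $q=0$: taking $V(q)=\sum_\alpha (c_\alpha/\alpha!)\,q^\alpha$, the derivative $V_{\ell_1\cdots\ell_r}(0)$ equals the free coefficient $c_{e_{\ell_1}+\cdots+e_{\ell_r}}$, so the partial derivatives of $V$ of each order $r$ become independent symmetric-tensor indeterminates $y^{(r)}_{\ell_1\cdots\ell_r}$. Under this substitution, Definition~\ref{def:elham} shows that $\Psi_V(u)$ evaluated at $q=0$ becomes a concrete polynomial in the $y$-variables and in $p_1,\ldots,p_d$: it is the sum over all edge-labelings $j\colon E(u)\to\{1,\ldots,d\}$ of the monomial assigning to each thick vertex the factor $y^{(r)}$ carrying the labels of its incident edges, and to each free-end edge a factor $p_{j_e}$. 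Equivalently, it is the evaluation of the tree $u$ read as a tensor contraction: thick vertices are the symmetric tensors, internal (thick--thin free-end) edges carry external $p$-legs, and thick--thick edges are contracted indices. It then suffices to show these tree-contraction polynomials are linearly independent.

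The crux, and the step I expect to be the main obstacle, is to show that non-isomorphic trees produce polynomials with disjoint monomial support, so that no cancellation between distinct trees is possible. For this I would isolate the \emph{injective} labelings (available because $d\geq n\geq|E(u)|$): in a monomial coming from an injective labeling, each thick--thick edge index appears in exactly two $y$-factors and each free-end edge index appears in exactly one $y$-factor together with one $p$-factor. From such a monomial the tree can be reconstructed up to isomorphism: the $y$-factors are the thick vertices (valence recorded by the number of indices), two thick vertices are joined by an internal edge precisely when their index sets meet, and each $p_i$ whose index lies in a thick vertex's index set records a free-end leaf. Hence each injective-labeling monomial determines $u$ uniquely, so for $u\neq u'$ the distinct-index parts of $\Psi_V(u)|_{q=0}$ and $\Psi_V(u')|_{q=0}$ are supported on disjoint monomials, each occurring with a positive integer coefficient. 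Therefore $\sum_u c_u\,\Psi_V(u)|_{q=0}$ is a nonzero polynomial in the $c_\alpha$ and the $p_i$ whenever some $c_u\neq0$.

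Finally I would undo the linearization: a nonzero polynomial in the finitely many coefficients $c_\alpha$ and in $p$ takes a nonzero value for some numerical choice, i.e.\ for some honest polynomial $V$ evaluated at some point, giving $\Psi_V(W)\neq0$. The base cases $\Psi_V(\cdot)=T$ and $\Psi_V(\o)=V$ lie in degrees $2$ and $0$ and are disposed of directly by the grading. Together these give $\bigcap_V\ker(\Psi_V)=0$, recovering Calvo's result. The reconstruction argument is routine once set up; the only care needed is to confirm that in $\T$ every edge is either thick--thick or thick--thin (so that every thin vertex is a leaf attached to a thick vertex), ensuring the contraction picture is genuinely a tree and the reconstruction is unambiguous.
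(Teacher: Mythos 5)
Your proposal is correct, and it proves the same underlying combinatorial fact as the paper --- that a tree in $\T$ is reconstructible from the incidence data of edges at thick vertices --- but it packages the argument differently. The paper's proof is a ``dual family'' construction in the Butcher style: for each tree $u$ it builds one explicit polynomial potential $V_u$ (a sum of square-free monomials, one per thick vertex of $u$, in variables indexed by the edges of $u$) and checks that $\Psi_{V_u}(u')$ evaluated at $p=(1,\dots,1)$, $q=0$ is nonzero exactly when $u'=u$; since every surviving term is a product of $0$'s and $1$'s, positivity does all the work and no genericity or specialization argument is needed, nor any reduction to homogeneous components. Your route instead treats the Taylor coefficients of $V$ at $q=0$ as symmetric-tensor indeterminates, isolates the injective edge-labelings, and shows distinct trees have disjoint monomial supports there; this requires the preliminary reduction to a fixed order and degree (so that ``number of distinct indices equals number of edges'' characterizes injective labelings --- you should make the scaling argument $V\mapsto\lambda V$, $p\mapsto\mu p$ explicit when you claim the kernel intersection is bigraded) and a final ``nonzero polynomial over $\R$ takes a nonzero value'' step. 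What your version buys is a slightly stronger statement (linear independence of the symbolic tree-contractions themselves, uniformly in $V$) and a cleaner separation of the combinatorial reconstruction lemma from the analytic specialization; what the paper's version buys is brevity and the avoidance of any genericity considerations, since the separating potentials are written down explicitly. Your closing caveat about thin--thin edges is well taken but harmless: the trees of $\T$ that actually occur (those generated from $\o$ and $\cdot$ by $\lhd$) have all thin vertices as leaves attached to thick vertices, so the contraction picture and the reconstruction are unambiguous.
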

\begin{proof}{}
It is sufficient to show that for each $u \in \T$, there exists $V \in \Cinf$,  with $d$ the number of edges of $u$,  such that, given $u' \in \T$, the value of $\Psi_V(u')$ at $p=(1,\ldots,1)$ and $q=(0,\ldots,0)$ is non-zero if and only if $u'=u$. Such a function $V$ can be constructed for each $u \in \T$ as follows: Label each thick vertex of $u$ from $1$ to $m$ ($m=d+1-\degree(u)$), and each edge from $1$ to $d$. Then, $V = W_1+\cdots + W_{d+1}$, where $W_i = q_{\ell_1} \cdots q_{\ell_r}$ provided that the edges adjacent to the $i$th vertex are those labelled by $\ell_1,\ldots,\ell_r$ respectively. 
\end{proof}  

Conjecture~\ref{conj:ker0} is now equivalent to the following 
  \begin{conjecture}
\label{conj:Theta}
$\Theta:L_\P(A,B) \to \span(\T)$ is injective.
\end{conjecture}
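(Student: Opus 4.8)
The plan is to prove injectivity of $\Theta$ directly, as a purely combinatorial statement about the two explicit models, using that $\Theta$ respects both the grading by order and the grading by degree. Each bidegree (order $n$, degree $m$) component of $L_\P(A,B)$ is finite dimensional and is carried by $\Theta$ into the matching bidegree component of $\span(\T)$, so it suffices to prove that $\{\Theta(U):U\in\B\}$ is linearly independent, where $\B$ is the basis of Subsection~\ref{ss:basis}. Since $\B=\B^0\sqcup\overline{\B}$ splits along the decomposition of Theorem~\ref{th:1} into the abelian degree-0 part $\span(\X)$ and the free part $L(A,[\X,A])$, and $\Theta$ preserves degree, I would settle the two summands separately and then assemble them, since images of elements of different degree cannot interfere.

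For the degree-0 part I would first transport the operation \eqref{eq:star} to trees. Unwinding $\Theta(U_1*U_2)=[\Theta(U_2),[\Theta(U_1),\cdot\,]]$ with the definition of $\lhd$, and using $\cdot\lhd u=0$ together with the fact that a degree-0 tree has no free-end edges (so $(\Theta(U_1)\lhd\cdot)\lhd\Theta(U_2)=0$), one finds that $\Theta(U_1*U_2)$ is the sum, over all ordered pairs formed by a thick vertex of $\Theta(U_1)$ and a thick vertex of $\Theta(U_2)$, of the all-thick tree obtained by joining the two by a single new edge. Thus $\Theta$ realizes the free commutative nonassociative algebra on $\o$ inside the span of all-thick trees, the product being ``glue two trees by one edge, summed over attachment points.'' The goal is then to equip the all-thick trees with a total order refining order and degree for which this product is triangular, i.e.\ each $\Theta(U)$, $U\in\X$, has a distinguished maximal tree occurring with a positive integer coefficient, and for which the leading-tree assignment $\X\to\{\text{all-thick trees}\}$ is injective; triangularity then yields linear independence of $\Theta(\B^0)=\Theta(\X)$.

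For the free part I would propagate this. The generators $[X,A]$, $X\in\X$, map to $\Theta(X)\lhd\,\cdot$ (a free-end edge added at each thick vertex), so independence of $\Theta(\X)$ gives independence of the degree-1 generators $\Theta([X,A])$. Because $L(A,[\X,A])$ is free on $\{A\}\cup[\X,A]$ and $\overline{\B}$ is a generalized Hall basis for it, I would run a leading-term induction on order: each Hall element $[U_3U_2U_1]$ maps to an alternating sum of $\lhd$-graftings, and one must exhibit for it a canonical tree of $\T$ occurring with nonzero coefficient, with the resulting assignment injective and compatible with a fixed total order on $\T$. Concretely this amounts to constructing a bijection between $\overline{\B}$ and a family of representative free RKN trees, together with the no-cancellation statement that the designated tree survives.

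I expect the main obstacle to be exactly this no-cancellation control across iterated brackets. The operation $\lhd$ smears each element over a large sum (one summand per thick vertex, respectively per free-end edge), so forming $u\lhd v-v\lhd u$ repeatedly produces trees whose coefficients are sums of competing contributions; one must prove that the leading tree of each basis element keeps a nonzero coefficient while distinct basis elements acquire distinct leading trees. A promising route is to find a tree statistic that is strictly monotone under the relevant graftings, forcing the leading term and preventing interference, and to show the induced leading-tree map is a bijection onto the free RKN trees; producing such a statistic and verifying the bijection is the crux that resists a short argument. As a guiding check one has the dimension count: injectivity holds in a given bidegree if and only if $\dim L_\P^n(A,B)$ from Section~\ref{s:entropy} equals the dimension in that bidegree of the subalgebra of $\span(\T)$ generated by $\o$ and $\cdot$; in degree 0 this already forces the Wedderburn--Etherington numbers to embed into the tree counts, which is consistent since the former never exceed the latter.
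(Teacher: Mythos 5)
This statement is labelled a \emph{conjecture} in the paper, and the paper does not prove it: it only establishes the injectivity of the restriction of $\Theta$ to $\span(\X)\oplus\span([\X,A])$ (by identifying $\span(\T^1)$ with the free pre-Lie algebra on rooted trees via Chapoton--Livernet and then invoking the Bergeron--Loday theorem that the symmetrized pre-Lie product is magmatic), together with dimension counts consistent with injectivity. Your proposal does not close the gap either, and you say so yourself: the entire content of the conjecture beyond the generators lives in the step you defer. Knowing that $\Theta$ maps the free generating set $\{A\}\cup[\X,A]$ of $L(A,[\X,A])$ to linearly independent elements of $\span(\T)$ says nothing about injectivity on the whole free Lie algebra, because $\span(\T)$ is not free and iterated brackets of the images could satisfy relations. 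Your proposed remedy --- a total order on $\T$ for which each Hall element has a surviving leading tree, with distinct basis elements getting distinct leading trees --- is exactly the missing no-cancellation argument, and you neither produce the order, nor the tree statistic, nor the injectivity of the leading-tree assignment. Note also that your stated target, a \emph{bijection} from $\overline{\B}_n$ onto the free RKN trees of order $n$, is impossible for $n>8$, since the paper's tables give $\dim L_\P^n(A,B)<\dim\T^n$ there; at best you could hope for an injection into a proper subset, which is a warning sign that no simple leading-term mechanism can work uniformly. Your closing dimension criterion (injectivity iff $\dim L_\P^n(A,B)$ equals the dimension of the subalgebra of $\span(\T)$ generated by $\o$ and $\cdot$) is correct but is merely a restatement of the problem, not progress on it.

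Two smaller points. First, your computation that for degree-0 elements $\Theta(U_1*U_2)$ is the sum over pairs of thick vertices of the tree obtained by joining $\Theta(U_1)$ and $\Theta(U_2)$ by a new edge is correct and matches the paper's coordinate formula $\nabla V_1\cdot\nabla V_2$. But even for this degree-0 part your plan understates the difficulty: proving that this ``join by an edge in all ways'' product generates a free commutative nonassociative algebra is essentially the Bergeron--Loday theorem, which the paper cites precisely because it does not follow from a naive triangularity argument; you would need to actually exhibit the monotone statistic, and none is given. Second, your passage from independence of $\Theta(\X)$ to independence of $\Theta([\X,A])$ is fine (distinct free trees yield rooted trees with disjoint supports under $u\mapsto u\lhd\cdot$), but this is the easy direction and is already contained in what the paper proves. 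In sum: the approach is a reasonable sketch of how one might attack the conjecture, and it correctly isolates where the difficulty lies, but it is not a proof, and the statement remains open exactly as in the paper.
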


To support our conjecture, we compare the dimensions of the homogeneous subspaces $\T^n$ of elements of $\T$ of order $n=1,2,3,\ldots,20$, namely
 \begin{center}
 2, 1, 2, 2, 4, 5, 10, 14, 27, 43, 82, 140, 269, 486, 939, 1765, 3446, 
6652
\end{center}
with the dimensions of $L_\P^n(A,B)$ displayed in Table~\ref{tab:2}. They coincide up to $n=8$, and obey
$\dim L_\P^n(A,B)<\dim \T^n$ for $n>8$. The  dimensions of the subspaces of homogeneous order and degree of $L_P(A,B)$ and $\T$ (displayed in Table~\ref{tab:dimLACM} and Table~\ref{tab:dimT}) are also compatible
with the conjecture. 

\begin{table}
\begin{center}
\tabcolsep=0pt\def\arraystretch{1.1}
\begin{tabular}{*{19}{wr{9mm}}}
\small
 1 & 0 & 1 & 0 & 0 & 0 & 0 & 0 & 0 & 0
   & 0 & 0 & 0 & 0 & 0 & 0 & 0 & 0 \\
 0 & 1 & 0 & 0 & 0 & 0 & 0 & 0 & 0 & 0
   & 0 & 0 & 0 & 0 & 0 & 0 & 0 & 0 \\
 1 & 0 & 1 & 0 & 0 & 0 & 0 & 0 & 0 & 0
   & 0 & 0 & 0 & 0 & 0 & 0 & 0 & 0 \\
 0 & 1 & 0 & 1 & 0 & 0 & 0 & 0 & 0 & 0
   & 0 & 0 & 0 & 0 & 0 & 0 & 0 & 0 \\
 1 & 0 & 2 & 0 & 1 & 0 & 0 & 0 & 0 & 0
   & 0 & 0 & 0 & 0 & 0 & 0 & 0 & 0 \\
 0 & 2 & 0 & 2 & 0 & 1 & 0 & 0 & 0 & 0
   & 0 & 0 & 0 & 0 & 0 & 0 & 0 & 0 \\
 2 & 0 & 4 & 0 & 3 & 0 & 1 & 0 & 0 & 0
   & 0 & 0 & 0 & 0 & 0 & 0 & 0 & 0 \\
 0 & 4 & 0 & 6 & 0 & 3 & 0 & 1 & 0 & 0
   & 0 & 0 & 0 & 0 & 0 & 0 & 0 & 0 \\
 3 & 0 & 9 & 0 & 8 & 0 & 4 & 0 & 1 & 0
   & 0 & 0 & 0 & 0 & 0 & 0 & 0 & 0 \\
 0 & 9 & 0 & 14 & 0 & 11 & 0 & 4 & 0 &
   1 & 0 & 0 & 0 & 0 & 0 & 0 & 0 & 0
   \\
 6 & 0 & 20 & 0 & 23 & 0 & 14 & 0 & 5
   & 0 & 1 & 0 & 0 & 0 & 0 & 0 & 0 & 0
   \\
 0 & 18 & 0 & 37 & 0 & 32 & 0 & 17 & 0
   & 5 & 0 & 1 & 0 & 0 & 0 & 0 & 0 & 0
   \\
 11 & 0 & 46 & 0 & 62 & 0 & 46 & 0 &
   21 & 0 & 6 & 0 & 1 & 0 & 0 & 0 & 0
   & 0 \\
 0 & 41 & 0 & 90 & 0 & 97 & 0 & 60 & 0
   & 25 & 0 & 6 & 0 & 1 & 0 & 0 & 0 &
   0 \\
 23 & 0 & 106 & 0 & 165 & 0 & 144 & 0
   & 80 & 0 & 29 & 0 & 7 & 0 & 1 & 0 &
   0 & 0 \\
 0 & 88 & 0 & 228 & 0 & 274 & 0 & 206
   & 0 & 100 & 0 & 34 & 0 & 7 & 0 & 1
   & 0 & 0 \\
 46 & 0 & 248 & 0 & 438 & 0 & 438 & 0
   & 285 & 0 & 127 & 0 & 39 & 0 & 8 &
   0 & 1 & 0 \\
 0 & 198 & 0 & 562 & 0 & 777 & 0 & 658
   & 0 & 384 & 0 & 154 & 0 & 44 & 0 &
   8 & 0 & 1 \\
\end{tabular}
\caption{
\label{tab:dimLACM}
Dimensions of homogeneous subspaces of $L_\P(A,B)$ of order $n$ (rows $n=1,\ldots,18$) and degree $m-1$ (columns $m=1,\ldots,18$).
}
\end{center}
\end{table}

\begin{table}
\begin{center}
\tabcolsep=0pt\def\arraystretch{1.1}
\begin{tabular}{*{19}{wr{9mm}}}
\small
 1 & 0 & 1 & 0 & 0 & 0 & 0 & 0 & 0 & 0
   & 0 & 0 & 0 & 0 & 0 & 0 & 0 & 0 \\
 0 & 1 & 0 & 0 & 0 & 0 & 0 & 0 & 0 & 0
   & 0 & 0 & 0 & 0 & 0 & 0 & 0 & 0 \\
 1 & 0 & 1 & 0 & 0 & 0 & 0 & 0 & 0 & 0
   & 0 & 0 & 0 & 0 & 0 & 0 & 0 & 0 \\
 0 & 1 & 0 & 1 & 0 & 0 & 0 & 0 & 0 & 0
   & 0 & 0 & 0 & 0 & 0 & 0 & 0 & 0 \\
 1 & 0 & 2 & 0 & 1 & 0 & 0 & 0 & 0 & 0
   & 0 & 0 & 0 & 0 & 0 & 0 & 0 & 0 \\
 0 & 2 & 0 & 2 & 0 & 1 & 0 & 0 & 0 & 0
   & 0 & 0 & 0 & 0 & 0 & 0 & 0 & 0 \\
 2 & 0 & 4 & 0 & 3 & 0 & 1 & 0 & 0 & 0
   & 0 & 0 & 0 & 0 & 0 & 0 & 0 & 0 \\
 0 & 4 & 0 & 6 & 0 & 3 & 0 & 1 & 0 & 0
   & 0 & 0 & 0 & 0 & 0 & 0 & 0 & 0 \\
 3 & 0 & 10 & 0 & 9 & 0 & 4 & 0 & 1 &
   0 & 0 & 0 & 0 & 0 & 0 & 0 & 0 & 0
   \\
 0 & 9 & 0 & 17 & 0 & 12 & 0 & 4 & 0 &
   1 & 0 & 0 & 0 & 0 & 0 & 0 & 0 & 0
   \\
 6 & 0 & 24 & 0 & 30 & 0 & 16 & 0 & 5
   & 0 & 1 & 0 & 0 & 0 & 0 & 0 & 0 & 0
   \\
 0 & 20 & 0 & 50 & 0 & 44 & 0 & 20 & 0
   & 5 & 0 & 1 & 0 & 0 & 0 & 0 & 0 & 0
   \\
 11 & 0 & 63 & 0 & 96 & 0 & 67 & 0 &
   25 & 0 & 6 & 0 & 1 & 0 & 0 & 0 & 0
   & 0 \\
 0 & 48 & 0 & 146 & 0 & 164 & 0 & 91 &
   0 & 30 & 0 & 6 & 0 & 1 & 0 & 0 & 0
   & 0 \\
 23 & 0 & 164 & 0 & 315 & 0 & 267 & 0
   & 126 & 0 & 36 & 0 & 7 & 0 & 1 & 0
   & 0 & 0 \\
 0 & 115 & 0 & 437 & 0 & 592 & 0 & 408
   & 0 & 163 & 0 & 42 & 0 & 7 & 0 & 1
   & 0 & 0 \\
 47 & 0 & 444 & 0 & 1022 & 0 & 1059 &
   0 & 603 & 0 & 213 & 0 & 49 & 0 & 8
   & 0 & 1 & 0 \\
 0 & 286 & 0 & 1300 & 0 & 2126 & 0 &
   1754 & 0 & 856 & 0 & 265 & 0 & 56 &
   0 & 8 & 0 & 1 \\
\end{tabular}
\caption{Dimensions of homogeneous subspaces of $\T$ of order $n$ (rows $n=1,\ldots,18$) and degree $m-1$ (columns $m=1,\ldots,18$).
\label{tab:dimT}
}
\end{center}
\end{table}

To further support our conjecture, we next show that the restriction of $\Theta$ to $[\X,A]$ (and thus also the restriction to $\X$) is injective: Clearly, $\T^0$ can be identified with the set of (uncoloured) free trees. Hence, for each $X \in \X$, $\Theta(X)$ can be seen as a linear combination of free trees. On the other hand, $\T^1$ can be identified with the set of uncoloured rooted trees, each tree $u \in T^1$ with $m$ vertices being identified with a rooted tree with $m-1$ vertices (the unique thin vertex in $u$ indicating the location of the root). With that identification, the restriction of the binary operation $\lhd$ to $\T^1$ corresponds to the grafting operation on rooted trees. Hence, according to~\cite{chapoton}, $\span(\T^1)$ is the free pre-Lie algebra on one generator with respect to the operation $\lhd$. Now, it can be shown that, given $X_1,X_2 \in \X$, 
 $$\Theta([X_1 * X_2, A]) = \Theta([X_1,A]) \rhd \Theta([X_2,A]) + \Theta([X_2,A]) \rhd \Theta([X_1,A]).$$
The injectivity of the restriction of $\Theta$ to $[\X,A]$ is then a consequence of the main result in~\cite{BergeronLoday}.

\section{The Lie algebra of the time-dependent Schr\"odinger equation}
\label{s:schrodinger}

Application of the operator splitting method to the time-dependent linear Schr\"odinger equation
$$i \frac{\partial}{\partial t} u(q,t) = \nabla^2 u(q,t)  +  V(q) u(q,t)$$
in $\R^d$ leads to the Lie algebra of endomorphisms of $\Cinf$ generated under the commutator bracket, by the Laplace operator $\nabla^2 = \sum_{j=1}^d \frac{\partial^2 }{\partial q_j^2}$  and the multiplicative operator $V(q)$. We will show that this Lie algebra is isomorphic to the Lie algebra of classical mechanics in Euclidean space considered in previous subsection. 

Consider the unique Lie algebra morphism from $L(A,B)$ to $\End$ that sends $A$ to $\frac12\, \nabla^2$ and $B$ to $V$. We will say that an endomorphism of $\Cinf$ is of degree 0 if it is a multiplicative operator, and  of degree $n\geq 1$ if it is a differential operator of order $n$. For $n\geq 0$, we denote as $\End_n$ the vector subspace of endomorphisms of degree $n$. One can check that the homomorphism $L(A,B) \to \End$ sends elements of degree 0 to endomorphisms of degree 0. In addition, its kernel includes the Lie ideal $\mathcal{I}$ generated by Lie-brackets of elements of $L(A,B)$ of degree 0. This implies that there exists a unique Lie algebra homomorphism 
$$\phi_V: L_\P(A,B) \to \End$$ such that $\phi_V(A) = \frac12\, \nabla^2$ and $\phi_V(B)= V$. However,  $\phi_V$ is not compatible with the grading by degree. Indeed, provided that $U \in L_\P(A,B)$ has $\degree(U)\geq 1$,  $\phi_V(U)$ is in general a linear combination of endomorphism of degree $\leq \degree(U)$. 

We now consider the linear map $\hat \phi_V: L_\P(A,B) \to \End$  defined as follows: Given $U \in L_\P(A,B)$ with $\degree(U)=n$, we set $\hat \phi_V(U)$ as the projection to $\End_n$ of $\phi_V$. Clearly, $\hat \phi_V$ is compatible with the grading by degree. Actually, it is a homomorphism of Lie algebras of class $\P$, as it happens to be essentially the same as the homomorphism $\Phi_V:L_\P(A,B) \to \Cinfpol$ considered in previous subsection. More precisely, consider the natural injection $\nu: \Cinfpol \to \End$ that replaces each occurrence of $p_j$ ($j \in \{1,\ldots,d\}$) in $H \in \Cinfpol$ with $\partial_j$;  then 
$$\hat \phi_V = \nu \circ \Phi_V.$$  
This in turn shows that, if Conjecture~\ref{conj:Theta} holds true, then
\begin{eqnarray*}
\bigcap_{d\geq 1} \bigcap_{V \in \Cinf} \ker(\phi_{V}) = 0.
\end{eqnarray*}

\def\arraystretch{1}
\begin{table}
\small
\begin{center}
\begin{tabular}{|rrrrr|}
\hline
$n$ & $\dim L^n(A,B)$&$\dim L^n_\P(A,B)$&$x_n$ & $y_n$ \cr
\hline
1&      2     &           2 & 1 & \cr
2&      1     &           1 &  & 1\cr
3&      2     &           2 & 1 & \cr
4&      3     &           2 & & 2 \cr   
5&      6     &           4 & 1 & \cr
6&      9     &           5 & &  4 \cr
7&      18    &          10 & 2 & \cr
8&      30    &          14 &  & 9 \cr
9&      56    &          25 & 3 & \cr
10&     99    &          39 &  & 18\cr
11&     186   &          69 & 6 & \cr
12&     335   &         110 & & 41\cr
13 & 630	&	193 & 11 & \cr
14 & 1161 	&  320 &  & 88\cr
15 & 2182	& 555 & 23  & \cr
16 & 4080	& 938 &  & 198\cr
17 & 7710	& 1630 & 46  & \cr
18 & 14532	& 2786 & & 441 \cr
19 & 27594	& 4852 & 98 &  \cr
20 & 52377	& 8370 &  & \cr
\hline
\end{tabular}
\caption{Dimensions of Lie algebras graded by order.
Column 2: Of the free Lie algebra with two generators.
Column 3: Of the Lie algebra of classical mechanics, $L_\P(A,B)$.
Column 4: Dimensions $x_n$ of the subspace of elements of degree 0 in  $L^n_\P(A,B)$.
Column 5: Dimensions $y_n$ of the subspace of elements of degree 1 in  $L^n_\P(A,B)$.
\label{tab:2}}

\end{center}
\end{table}

\section{The entropy of the Lie algebra of classical mechanics}
\label{s:entropy}

Let $z(t)$ be the generating function for the enumeration of $\X$ by order. Let $a(t)$ be the generating function for the enumeration of binary trees. It obeys
\begin{equation}
\label{eq:A}
 a(t) = t + \frac{1}{2}(a(t)^2 + a(t^2)) = t + t^2 + t^3 + 2 t^4 + 3 t^5 + 6 t^6 + \dots=\sum_{n=1}^\infty a_n t^n
 \end{equation}
Because the order of an element of $\X$ with $n$ $B$s is $2n-1$, we have
\begin{equation}
\label{eq:zdim}
z(t) = t^{-1} a(t^2).
\end{equation}
For any set $\W$ with generating function 
$w(t)=\sum_{n>0}|\{W\in\W:\order(W)=n\}|t^n$, the dimensions
of the homogeneous components of the
graded Lie algebra $L(\W)=\bigoplus_{n>0}L^n(\W)$ are given by
\cite{ka-ki,mu-ow}
\begin{equation}
\label{eq:cns}
\dim L^n(\W) = \sum_{d|n}\frac{1}{d}\mu(d) [t^{n/d}](-\log(1-w(t))).
\end{equation}

The generating function for the grading by order of
$\{A\}\cup\X$ is $t+t z(t)=t+a(t^2)$. An application of (\ref{eq:cns}) together with the dimensions of the abelian part from (\ref{eq:zdim}) gives the dimensions of $L_\P(A, B)$ as
listed in Table \ref{tab:2}. 

The asymptotic growth of $a_n$ was obtained by Wedderburn \cite{wedderburn} using a method that we review briefly.
He works in terms of $g(t) := 1-a(t)$, which from (\ref{eq:A}) obeys the functional equation
\begin{equation}
\label{eq:g}
 g(t^2) = g(t)^2 + 2t.
 \end{equation}
First he shows that the singularity $r$ of $g(t)$ of smallest modulus is unique, is a branch point of order 2,  and obeys $g(r)=0$. Since $g(r)=0$ we get
$$ g(r^2) = 2r,\quad g(r^4) = 6 r^2,\quad g(r^8) = 38 r^4,\dots,g(r^{2^{k+1}}) = c_k r^{2^{k}}$$
where
$$ c_0 = 2,\quad c_{k+1} = c_k^2+2,\quad k=1,2,3,\dots.$$
Now $\lim_{k\to\infty}g(r^{2^k}) = g(0) = 1$ so $r = \lim_{k\to\infty} c_k^{-2^{-k}}$.
The singularity at $r$ determines the growth rate of the $a_n$, and a little more work \cite{otter} determines the leading constant in 
$$ z_{2n-1} = c_n \sim \eta n^{-3/2} r^{-n},\quad\eta\approx 0.318777,\quad r^{-1}\approx 2.4832535361726368586.$$

Therefore, the number of modified potentials of odd order $n$ is 
$$\dim(\span\X)^n = \mathcal{O}((r^{-1/2})^n)  = {\mathcal O}(1.2416267680863184293^n).$$

From Eq. (\ref{eq:cns}), the asymptotic growth of $\dim L^n(\W)$
is determined by the singularities
of $-\log(1-w(t))$. These correspond to zeros and singularities of
$1-w(t)$. In particular, if $1-w(t)$ has a simple zero at $t=\alpha$
and no other zero with $|t|\le\alpha$, then 
\begin{equation}
\label{eq:cns2}
 \dim L^n(\W) \sim \frac{1}{n}\left(\frac{1}{\alpha}\right)^n
\end{equation}
and the Lie algebra has entropy $1/\alpha$.

In the case of $L(A,[\X,A])$, we have $w(t) = t + a(t^2)$ and the growth rate is determined by the smallest root $\alpha$ of $t + a(t^2) = 1$. This gives $a(\alpha^2) = 1-\alpha$ or $g(\alpha^2) = \alpha$. Applying (\ref{eq:g}) as before leads to
$$ g(\alpha^2) = \alpha,\quad g(\alpha^4) = 3\alpha^2,\quad g(\alpha^8) = 11\alpha^4,\dots,
g(\alpha^{2^{k+1}}) = e_k r^{2^k}$$
where 
$$ e_0 = 1,\quad e_{k+1} = e_k^2 + 2,\ k=1,2,3,\dots.$$
Therefore $\alpha = \lim_{k\to\infty} d_e^{-2^{-k}}.$

Since the dimensions of the free part dominate the abelian part, we have that the entropy of the Lie algebra of classical mechanics is
$$1/\alpha = 1.8249111600523655937\ldots.$$

\bigskip
\noindent{\bf Acknowledgements.} RM was supported in part by the Royal Society Te Ap\={a}rangi.  AM has been  supported in part by project MTM2016-77660-P(AEI/FEDER, UE) from Ministerio de Econom'a, Industria y Competitividad and  by the Basque Government (Consolidated Research Group IT649-13).

\end{document}